    \newtheorem{theorem}{Theorem}[section]
    \newtheorem{lemma}[theorem]{Lemma}
    \newtheorem{corollary}[theorem]{Corollary}
\newcommand{\myproof}[1]{
\ifthenelse{\boolean{withproofs}}{#1}{}}
\newcommand{\withproofs}[1]{
\ifthenelse{\boolean{withproofs}}{#1}{}}
\newcommand{\withoutproofs}[1]{
\ifthenelse{\boolean{withproofs}}{}{#1}}
\newcommand{\tm}{t}
\newcommand{\tmtwo}{u}
\newcommand{\tmthree}{r}
\newcommand{\tmfour}{w}
\newcommand{\tmfive}{s}
\newcommand{\var}{x}
\newcommand{\vartwo}{y}
\newcommand{\varthree}{z}
\newcommand{\rootRew}[1]{\mapsto_{#1}}
\newcommand{\Rew}[1]{\rightarrow_{#1}}
\newcommand{\rtob}{\rootRew{\beta}}
\newcommand{\tob}{\Rew{\beta}}
\newcommand{\symfont}[1]{\mathsf{#1}}
\newcommand{\val}{v}
\newcommand{\valtwo}{w}
\newcommand{\valthree}{v'}
\newcommand{\ctxholep}[1]{[#1]}
\newcommand{\ctxhole}{\ctxholep{\cdot}}
\newcommand{\evctx}{E}
\newcommand{\evctxp}[1]{\evctx\ctxholep{#1}}
\newcommand{\nbvctxtwo}[1]{\nbvctxtwo{#1}}
\newcommand{\defeq}{:=}
\newcommand{\grameq}{::=}
\newcommand{\isub}[2]{\{#1/#2\}}
\newcommand{\llbrace}{\{ \kern -0.27em \vert}
\newcommand{\rrbrace}{\vert \kern -0.27em \}}
\renewcommand{\l}{\lambda}
\newcommand{\ie}{i.e.\xspace}
\newcommand{\ih}{\textit{i.h.}\xspace}
\newcommand{\ignore}[1]{}
\newcommand{\myinput}[1]{\ifthenelse{\boolean{withimages}}{\input{#1}}{}}
\newcommand{\nat}{\mathbb{N}}
\newcommand{\size}[1]{|#1|}
		\newcommand{\state}{s}
\newcounter{numberone}
\newcounter{numbertwo}
\renewcommand{\ctxholep}[1]{\langle #1\rangle}
\newcommand{\tmsix}{q}
\newcommand{\reflemma}[1]{Lemma~\ref{l:#1}}
\newcommand{\reflemmaeq}[1]{L.~\ref{l:#1}}
\renewcommand{\isub}[2]{\{#1{\shortleftarrow}#2\}}
\newcommand{\run}{\rho}
\newcommand{\cons}{{\cdot}}
\newcommand{\tomachhole}[1]{\rightarrow_{#1}}
\newcommand{\la}[1]{\lambda #1.}
\newcommand{\bigo}[1]{\mathcal{O}(#1)}
\newcommand{\Id}{\symfont{I}}
\newcommand\mydots{\hbox to .6em{.\hss.}}
\newcommand{\str}{s}
\newcommand{\enc}[1]{\lceil#1\rceil}
\newcommand{\ems}{\varepsilon}
\newcommand{\detLam}{\Lambda_{\tt det}}
\newcommand{\tobdet}{\rightarrow_{det}}
\newcommand{\alpone}{\Sigma}
\newcommand{\cods}[1]{\overline{#1}}
\newcommand{\refcoroeq}[1]{Cor.~\ref{coro:#1}}
\newcommand{\fixnospace}{\symfont{fix}}
\newcommand{\fix}{\fixnospace\,}
\newcommand{\M}{\mathcal M}
\newcommand{\cont}{k}
\newcommand{\init}[2]{{\tt init}^{#1}_{#2}}
\newcommand{\inits}{{\tt init}}
\newcommand{\elemblank}{\Box}
\renewcommand{\state}{q}
\newcommand{\States}{Q}
\newcommand{\statein}{\state_{\mathit{in}}}
\newcommand{\statefint}{\state_{\mathit{T}}}
\newcommand{\statefinf}{\state_{\mathit{F}}}
\newcommand{\config}{C}
\newcommand{\configtwo}{D}
\newcommand{\final}[2]{{\tt final}^{#1}_{#2}}
\newcommand{\finals}{{\tt final}}
\newcommand{\transaux}{{\tt transaux}}
\newcommand{\trans}[1]{{\tt trans}^{#1}}
\newcommand{\transs}{{\tt trans}}
\newcommand{\strone}{s}
\newcommand{\strtwo}{r}
\newcommand{\strthree}{p}
\newcommand{\append}{{\tt{append}}}
\newcommand{\appendalpchar}[2]{\append_{#1}^{#2}}
\newcommand{\appendchar}[1]{\append^{#1}}
\newcommand{\elone}{a}
\newcommand{\elem}{a}
\renewcommand{\enc}[1]{\overline{#1}}
\renewcommand{\cods}[1]{\lceil#1\rceil}
\newcommand{\cod}[2]{\cods{#1}^{#2}}
\newcommand{\initconfig}{\config_{\tt in}^\M(\strone)}
\newcommand{\initconfigs}{\config_{\tt in}(\inputstr)}
\newcommand{\tomachtur}{\tomachhole{\M}}
\newcommand{\Nset}{\mathbb{N}}
\newcommand{\inputstr}{i}
\newcommand{\counter}{n}
\newcommand{\Statesfin}{\States_{\mathit{fin}}}
\newcommand{\tuple}[1]{\langle #1 \rangle}
\newcommand{\succf}{\textsc{succ}}
\newcommand{\succl}{\mathtt{succ}}
\newcommand{\succaux}{\texttt{succaux}}
\newcommand{\predf}{\textsc{pred}}
\newcommand{\predl}{\mathtt{pred}}
\newcommand{\predaux}{\texttt{predaux}}
\newcommand{\lookupf}{\textsc{lookup}}
\newcommand{\lookupl}{\mathtt{lookup}}
\newcommand{\lookupaux}{\mathtt{lookupaux}}
\newcommand{\ntostr}[1]{\lfloor #1 \rfloor}
\newcommand{\rop}{\textsc{r}}
\newcommand{\Bool}{\mathbb{B}}
\newcommand{\Boolb}{\Bool_{\mathsf{W}}}
\newcommand{\Boolin}{\Bool_{\mathsf{I}}}
\newcommand{\csep}{\,|\,}
\newcommand{\encp}[2]{\enc{#1}^{#2}}
\newcommand{\bool}{b}
\newcommand{\wstr}{w}
\newcommand{\wstrl}{\wstr_{l}}
\newcommand{\wstrr}{\wstr_{r}}
\newcommand{\istr}{i}
\newcommand{\varf}{f}
\newcommand{\stsym}{\mathsf{L}}
\newcommand{\ensym}{\mathsf{R}}
\begin{document}
%
\title{A Log-Sensitive Encoding of\\ Turing Machines in the $\l$-Calculus}

\author[1]{Beniamino Accattoli}
\affil[1]{Inria, \& LIX, \'Ecole Polytechnique, UMR 7161, France}
\author[2]{Ugo Dal Lago}
\author[2]{Gabriele Vanoni}
\affil[2]{Universit\'a di Bologna \& Inria Sophia Antipolis, Italy}

\date{}


\maketitle

\begin{abstract}
This note modifies the reference encoding of Turing machines in the $\l$-calculus by Dal Lago and Accattoli \cite{DBLP:journals/corr/abs-1711-10078}, which is tuned for time efficiency, as to accommodate logarithmic space. There are two main changes: Turing machines now have \emph{two} tapes, an input tape and a work tape, and the input tape is encoded differently, because the reference encoding comes with a linear space overhead for managing tapes, which is excessive for studying logarithmic space.
\end{abstract}

\section{Introduction}
This note presents a new encoding of Turing machines into the $\l$-calculus and and proves its correctness. It is based over Dal Lago and Accattoli's reference encoding of single tape Turing machines
\cite{DBLP:journals/corr/abs-1711-10078}. The new encoding is tuned for studying logarithmic space complexity even though such a study is not carried out here but in a companion paper.  The aim of this note is to provide the formal definition of the encoding and the tedious calculations to prove its correctness. 

The key points of the new encoding with respect to the reference one are:
\begin{itemize}
\item \emph{Log-sensitivity}: the reference encoding cannot account for logarithmic space complexity because it is based on Turing machines with a single tape. Logarithmic space requires Turing machines with a read-only input tape, of input string $\istr$, and an ordinary work tape, and to only count the space used on the work tape---if one counts the space for $\istr$, it is impossible to use only $\bigo{\log\size\istr}$ space. We refer to such a separation of tapes as to \emph{log-sensitivity}. The reference encoding is log-\emph{in}sensitive instead.
\item \emph{Different encoding of the input tape}: simply adapting the reference encoding to two tapes is not enough for  preserving logarithmic space, because the reference encoding is tuned for time: it reads from tapes in $\bigo{1}$ time  but the reading mechanism comes with a $\bigo{\size\istr}$ space overhead, while the input tape has to be handled with at most $\bigo{\log\size\istr}$ space overhead. We then change the encoding and the reading mechanism of the input tape, trading time for space, as to read in $\bigo{\size\istr\log\size\istr}$ time with $\bigo{\log\size\istr}$ space overhead. The idea is that the position of the head is indicated by a pointer, given by the position index (of logarithmic size), and reading from the input requires scrolling the input tape sequentially, until the position index is reached.
\item \emph{Different time complexity}: by trading time for space, the new encoding is slower. If a Turing machine $\M$ takes time $T_{\M}(\size\istr)$ on input string $\istr$ then the encoding evaluates in  $\Theta((T_{\M}(\size\istr)+1)\cdot \size\istr\cdot 
	\log{\size\istr})$ $\beta$-steps rather than in $\Theta(T_{\M}(\size\istr))$ as in the reference encoding, because each transition reads from the input. The time complexity is however still polynomial and thus the encoding is reasonable for time (considered as the number of $\beta$-steps).
\end{itemize} 

\paragraph{Intrinsic and Mathematical Tape Representations}  A TM tape is a string plus a distinguished position, representing the head. There are two tape representations, dubbed \emph{intrinsic} and \emph{mathematical} by van Emde Boas in  \cite{DBLP:conf/sofsem/Boas12}, described next.
\begin{itemize}
\item The \emph{intrinsic} one represents both the string $\istr$ and the current position of the head as the triple $\istr = \istr_{l} \cdot h \cdot \istr_{r}$, where $\istr_{l}$ and $\istr_{r}$ are the prefix and suffix of $\istr$ surrounding the character $h$ read by the head. 

The reference encoding of TMs in the $\l$-calculus uses the intrinsic representation of tapes and, as already mentioned, reading costs $\bigo{1}$ time but the reading mechanism comes with a $\bigo{\size\istr}$ space overhead.

\item
The \emph{mathematical} representation of tapes, instead, is simply given by the index $n\in\nat$ of the head position, that is, the triple $\istr_{l} \cdot h \cdot \istr_{r}$ is replaced by the pair $(\istr, \size{\istr_{l}}+1)$. The index $\size{\istr_{l}}+1$ has the role of a pointer, of logarithmic size when represented in binary. 

An encoding in the $\l$-calculus based on the mathematical representation reads in $\bigo{\size\istr\log\size\istr}$ time with $\bigo{\log\size\istr}$ space overhead. The time cost is due to the fact that accessing the tape is done sequentially, by moving of one cell at a time on the tape and at each step decrementing of one the index, until the index is $0$. Therefore, accessing the right cell requires to decrement the index $\size{\istr_{l}}+1$ times, each time requiring time $\log{\size{\istr_{l}}+1}$, because the index is represented in binary.
\end{itemize}
The new log-sensitive encoding keeps the intrinsic representation for the work tape, and instead adopts the mathematical representation for the input tape. This is because linear space overhead for the work tape is not a problem for log-sensitivity, while the mathematical representation makes it harder to write on the tape.

\paragraph{Binary Arithmetic and Literature.} For implementing on the $\l$-calculus the manipulation of the head index, one needs to develop an encoding of binary strings and three operations: successor, predecessor, and lookup of the $n$-th character of a string starting from the binary representation of $n$. We do this using the Scott encoding of binary string, but using a reversed representation of binary strings. 

We developed our encoding of binary arithmetic from scratch, in a seemingly ad-hoc way. We discovered afterwards that our approach is a variation over encodings in the literature. Namely, it is quite similar to Mogensen's encoding \cite{DBLP:conf/ershov/Mogensen01}, itself building over Goldberg's study \cite{DBLP:journals/jfp/Goldberg00}.


\section{The Deterministic $\l$-Calculus and the Scott Encoding of Strings}
As it is the case for the reference encoding, also the new encodings has as image a restricted form of $\l$-calculus, what we refer to as the deterministic $\l$-calculus, where, in particular, call-by-name and call-by-value evaluation coincide.

\paragraph{Deterministic $\l$-Calculus.}
The language and the evaluation contexts of the \emph{deterministic $\l$-calculus $\detLam$} are given by:
\begin{center}
	$\begin{array}{r@{\hspace{.5cm}}rlllllll}
	\mbox{Terms} & \tm,\tmtwo,\tmthree,\tmfour &\grameq& \val \mid  \tm \val\\
	\mbox{Values} & \val, \valtwo, \valthree & \grameq & \la\var\tm \mid \var\\\\
	\mbox{Evaluation Contexts} & \evctx  & \grameq & \ctxhole\mid  \evctx\val
	\end{array}$
\end{center}

Note that 
\begin{itemize}
	\item \emph{Arguments are values}: the right subterm of an application has to be a value, in contrast to what happens in the ordinary $\l$-calculus. 
	\item \emph{Weak evaluation}: evaluation contexts are \emph{weak}, \ie they do not enter inside abstractions. 
\end{itemize}

Evaluation is then defined by:
\begin{center}
	$\begin{array}{c@{\hspace{.5cm}}rll}
	\textsc{Rule at top level} & \multicolumn{3}{c}{\textsc{Contextual closure}} \\
	\!(\la\var\tm)\tmtwo \rtob \tm \isub\var\tmtwo &
	\multicolumn{3}{c}{\evctxp \tm \tobdet \evctxp \tmtwo \textrm{~~~if } \tm \rtob \tmtwo} \\
	\end{array}$
\end{center}

\emph{Convention}: to improve readability we omit some parenthesis, giving precedence to application with respect to abstraction. Therefore $\la \var \tm \tmtwo$ stands for $\la \var (\tm \tmtwo)$ and not for $(\la \var \tm) \tmtwo$, that instead requires parenthesis.

The name of this calculus is motivated by the following immediate lemma.

\begin{lemma}[Determinism]
\label{l:lamdet-is-det}
Let $\tm \in \detLam$. If $\tm \tobdet \tmtwo$ and $\tm \tobdet \tmthree$ then $\tmtwo = \tmthree$.
\end{lemma}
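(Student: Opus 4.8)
The plan is to reduce determinism to a \emph{unique decomposition} property: every term $\tm \in \detLam$ can be written as $\evctx\langle r\rangle$ for a redex $r = (\la\var\tmthree)\val$ in \emph{at most one} way, i.e.\ the evaluation context $\evctx$ and the redex $r$ are uniquely determined when they exist at all. Granting this, determinism is immediate: if $\tm \tobdet \tmtwo$ and $\tm \tobdet \tmthree$, then each reduction arises from a decomposition $\tm = \evctx_1\langle r_1\rangle = \evctx_2\langle r_2\rangle$ with $r_1,r_2$ redexes; uniqueness forces $\evctx_1 = \evctx_2$ and $r_1 = r_2 = (\la\var\tmfour)\val$, and since the top-level rule $\rtob$ is a function of the redex (the contractum $\tmfour\isub\var\val$ is determined), we get $\tmtwo = \evctx_1\langle \tmfour\isub\var\val\rangle = \evctx_2\langle\tmfour\isub\var\val\rangle = \tmthree$.

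I would prove the unique decomposition property by structural induction on $\tm$, following the grammar $\tm \grameq \val \mid \tm\val$. First I would establish the auxiliary fact that \emph{values are normal}: a value is either a variable or an abstraction $\la\var\tmthree$, hence never of the shape of a redex (which is an application $(\la\var\tmthree)\val$) and never of the shape $\evctx''\val$ (also an application), so it admits no decomposition with a redex at all. In particular, since evaluation contexts are weak, no reduction ever takes place under a $\lambda$. For the inductive step $\tm = \tmthree\,\val$ I split on the head $\tmthree$: if $\tmthree = \la\var\tmfour$ then $\tm$ is itself a redex and the root decomposition $(\ctxhole,\tm)$ exists; if $\tmthree$ is not an abstraction the root decomposition is impossible. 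In either case any nonempty decomposition $(\evctx''\val',r)$ must match the argument, forcing $\val' = \val$ and $\evctx''\langle r\rangle = \tmthree$, i.e.\ a decomposition of the head $\tmthree$, to which I apply the induction hypothesis.

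The point that requires care---the main ``obstacle,'' though it is really just rigid bookkeeping---is that the context grammar $\evctx \grameq \ctxhole \mid \evctx\val$ only ever places the hole to the \emph{left} of an application, never inside the argument. This is what makes the decomposition rigid: in $\tmthree\,\val$ a redex can only occur at the root (when $\tmthree$ is an abstraction) or inside $\tmthree$, never inside $\val$, and these two possibilities are mutually exclusive, because a root redex forces $\tmthree$ to be an abstraction, which is a value and hence normal by the auxiliary fact. Consequently a term $\tmthree\,\val$ with $\tmthree$ an abstraction reduces only at the root, one whose head is a variable is normal, and one whose head is itself a reducible application inherits its unique redex from the head by the induction hypothesis; in all cases at most one redex is reachable. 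I expect no genuine difficulty beyond keeping the case analysis on the head $\tmthree$ (variable / abstraction / application) disjoint and exhaustive, and observing that arguments never need to be reduced precisely because they are already values.
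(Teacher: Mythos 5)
Your proposal is correct and follows essentially the same route as the paper's proof: a structural induction whose key case splits on whether the head of an application is an abstraction, resting on exactly the two facts the paper uses, namely that values do not reduce (evaluation is weak) and that evaluation contexts only place the hole to the left of an argument, never inside it. The only difference is presentational: you factor the induction through an explicit unique-decomposition lemma and obtain determinism as a corollary, whereas the paper runs the same induction directly on the determinism statement itself.
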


\begin{proof}
By induction on $\tm$. If $\tm$ is a variable or an abstraction then it cannot reduce. If $\tm = \tmfour \val$ then there are two cases:
\begin{itemize}
\item \emph{$\tmfour$ is an abstraction $\la\var\tmfive$}. Then $\tm = (\la\var\tmfive)\val \tobdet \tmfive\isub\var\val$ is the unique redex of $\tm$, that is, $\tmtwo=\tmthree=\tmfive\isub\var\val$.
\item \emph{$\tmfour$ is not an abstraction}. Then the two steps from $\tm$ come from two steps $\tmfour \tobdet \tmtwo'$ and $\tmfour \tobdet \tmthree'$ with $\tmtwo = \tmtwo' \val$ and $\tmthree= \tmthree' \val$, because $\ctxhole\val$ is the only possible evaluation context. By \ih, $\tmtwo' = \tmthree'$, that is, $\tmtwo=\tmthree$.\qedhere
\end{itemize}
\end{proof}

\paragraph{Fixpoint.} The encoding of Turing machines requires a general form of recursion, that is usually implemented via a fixpoint combinator. We use Turing's fixpoint combinator, in its call-by-value variant, that fits into $\detLam$ and that returns a fixpoint up to $\eta$-equivalence. Let $\fix$ be the term $\theta\theta$, where
$$
\theta \defeq \la\var \la\vartwo \vartwo(\la\varthree \var \var \vartwo 
\varthree).
$$
Now, given a term $\tmtwo$ let us show that $\fix\tmtwo$ is a fixpoint of 
$\tmtwo$ up to $\eta$-equivalence.
$$\begin{array}{lcllll}
	\fix\tmtwo  
	& = &
	(\la\var \la\vartwo \vartwo(\la\varthree \var \var \vartwo \varthree)) 
	\theta 
	\tmtwo \\
	& \tobdet & 
	(\la\vartwo \vartwo(\la\varthree \theta \theta \vartwo \varthree)) \tmtwo \\
	
	& \tobdet & 
	\tmtwo (\la\varthree \theta \theta \tmtwo \varthree)\\ 
	& = &
	\tmtwo (\la\varthree \fix \tmtwo \varthree)\\
	& =_\eta &
	\tmtwo (\fix \tmtwo)
\end{array}$$
It is well-known that $\eta$-equivalent terms are indistinguishable in the $\l$-calculus (this is B\"ohm's theorem). Therefore, we will simply use the fact that $\fix\tmtwo \tobdet^2 \tmtwo (\la\varthree \fix \tmtwo \varthree)$ without dealing with $\eta$-equivalence. This fact will not induce any complication.

\paragraph{Encoding alphabets.} Let $\alpone=\{\elone_1,\ldots,\elone_n\}$ be a 
finite alphabet. Elements of $\alpone$ are encoded
as follows:
$$
\cod{\elone_i}{\alpone} \defeq \lambda\var_1.\ldots.\lambda\var_n.\var_i \;.
$$
When the alphabet will be clear from the context we will simply write 
$\cods\elone_i$. Note that 
\begin{enumerate}
	\item the representation fixes a total order
	on $\alpone$ such that $\elem_i< \elem_j$ iff $i< j$;
	\item the representation of an element $\cod{\elone_i}{\alpone}$ requires 
	space linear (and not logarithmic) in $\size\alpone$. But, since $\alpone$ 
	is fixed, it actually requires constant space.
\end{enumerate}

\paragraph{Encoding strings.} A string in
$\strone\in\alpone^*$ is represented by a term
$\encp\strone{\alpone^*}$. Our encoding exploits the fact that a string is a concatenation of characters \emph{followed by the empty string $\varepsilon$} (which is generally omitted). For that, the encoding uses $\size\alpone+1$ abstractions, the extra one ($\var_\varepsilon$ in the definition below) being used to represent $\varepsilon$. The encoding is defined by induction on the structure of $\strone$ as follows:
\begin{align*}
	\enc{\varepsilon}^{\alpone^*} & \defeq 
	\la{\var_1}\ldots\la{\var_n}\la{\var_\varepsilon}\var_\varepsilon\;,\\
	\enc{\elone_i\strtwo}^{\alpone^*} & \defeq 
	\la{\var_1}\ldots\la{\var_n}\la{\var_\varepsilon}\var_i\enc{\strtwo}^{\alpone^*}.
\end{align*}
Note that the representation depends on the cardinality
of $\alpone$. As before, however, the alphabet is a fixed parameter, and so such a 
dependency is irrelevant. As an example, the encoding of the string $aba$ with respect to the alphabet $\set{a,b}$ ordered as $a<b$ is 
\begin{center}
$\enc{aba}^{\set{a,b}} = \la{\var_a}\la{\var_b}\la{\var_\varepsilon}\var_a (\la{\var_a}\la{\var_b}\la{\var_\varepsilon}\var_b(\la{\var_a}\la{\var_b}\la{\var_\varepsilon}\var_a(\la{\var_a}\la{\var_b}\la{\var_\varepsilon}{\var_\varepsilon})))$
\end{center}

\begin{lemma}[Appending a character in constant time]
	\label{l:append-char}
	Let $\alpone$ be an alphabet and $\elem \in\alpone$ one of its characters. 
	There is 
	a term $\appendalpchar\alpone\elem$ 
	such that for every
	continuation term $\cont$ and every string $\strone\in\alpone^*$, 
	$$
	\appendalpchar\alpone \elem \cont  \enc\strone
	\tobdet^{\bigo{1}} 
	\cont \enc{(\elem\strone)}.
	$$
\end{lemma}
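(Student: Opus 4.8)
The plan is to construct $\appendalpchar\alpone\elem$ explicitly as a small fixed term and then to prove the statement by a direct computation of constant length, with no induction needed. Writing $\alpone=\{\elone_1,\ldots,\elone_n\}$ and letting $i$ be the index with $\elem=\elone_i$, recall that by definition of the string encoding we have $\enc{(\elem\strone)}=\la{\var_1}\ldots\la{\var_n}\la{\var_\varepsilon}\var_i\enc\strone$. Hence prepending the character $\elem$ to an already encoded string $\enc\strone$ is just a matter of wrapping $\enc\strone$ inside the fixed prefix $\la{\var_1}\ldots\la{\var_n}\la{\var_\varepsilon}\var_i(\cdot)$, and this is exactly what the combinator must do once it receives the continuation and the string. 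Accordingly I would set
\[
\appendalpchar\alpone\elem \defeq \la\varthree\la\vartwo\,\varthree(\la{\var_1}\ldots\la{\var_n}\la{\var_\varepsilon}\var_i\vartwo),
\]
where $\varthree$ is the slot for the continuation and $\vartwo$ the slot for the string to be extended; note that this term is itself a value and that it is independent of $\strone$, which is what will force the step count to be constant.

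Next I would carry out the reduction. Reading the application with left-associativity, $\appendalpchar\alpone\elem\,\cont\,\enc\strone$ has the shape $(\appendalpchar\alpone\elem\,\cont)\,\enc\strone$. First I fire the redex $\appendalpchar\alpone\elem\,\cont$ inside the evaluation context $\ctxhole\,\enc\strone$, reaching $(\la\vartwo\,\cont(\la{\var_1}\ldots\la{\var_n}\la{\var_\varepsilon}\var_i\vartwo))\,\enc\strone$; then I fire the remaining top-level redex, substituting $\enc\strone$ for $\vartwo$, reaching $\cont(\la{\var_1}\ldots\la{\var_n}\la{\var_\varepsilon}\var_i\enc\strone)$. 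Since string encodings are closed, the substitution raises no capture issue, and by the recalled encoding clause the argument of $\cont$ is literally $\enc{(\elem\strone)}$. Thus $\appendalpchar\alpone\elem\,\cont\,\enc\strone\tobdet^2\cont\enc{(\elem\strone)}$, which is $\bigo 1$ as required.

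The only point that needs care --- rather than any genuine difficulty --- is checking that the two steps are legal in $\detLam$, where every argument must be a value and evaluation contexts are weak. This requires $\cont$ to be a value, which is implicit in its being called a continuation term and guarantees that $\appendalpchar\alpone\elem\,\cont$ is a genuine redex; the string encoding $\enc\strone$ is an abstraction and hence already a value, so the second redex too is fired against a value; and the two redexes sit respectively in the contexts $\ctxhole\,\enc\strone$ and $\ctxhole$, neither of which enters an abstraction, so weakness is respected. I expect no real obstacle: the result is essentially a definitional unfolding, and the constant bound follows from the combinator being fixed and oblivious to $\strone$.
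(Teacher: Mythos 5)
Your proposal is correct and takes essentially the same route as the paper: your combinator is $\alpha$-equivalent to the paper's $\appendalpchar\alpone\elem \defeq \la{\cont'}\la{\strone'}\cont'(\la{\var_1}\ldots\la{\var_{\size\alpone}}\la{\var_\varepsilon}\var_{i_\elem}\strone')$, and both proofs conclude by the same two-step $\beta$-reduction yielding $\cont\enc{(\elem\strone)}$. Your additional checks---that $\cont$ and $\enc\strone$ are values and that the evaluation contexts are weak---are left implicit in the paper but are accurate.
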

\begin{proof}
	Define the term $\appendalpchar \alpone \elem \defeq \la{\cont'} 
	\la{\strone'} \cont'(\lambda \var_1.\ldots.\lambda 
	\var_{\size\alpone}.\lambda {\var_\varepsilon}.\var_{i_\elem} \strone')$ where 
	$i_\elem$ is the index of $\elem$ in 
	the ordering of $\alpone$ fixed by its encoding, that appends the character 
	$\elem$ to the string $\strone'$ relatively to the alphabet $\alpone$. We 
	have:
	\begin{center}$\begin{array}{rclcl}
		\appendalpchar \alpone  \elem \cont \enc\strone
		& = &
		(\la{\cont'} \la{\strone'} \cont'(\lambda \var_1.\ldots.\lambda 
		\var_{\size\alpone}.\lambda {\var_\varepsilon}.\var_{i_\elem} \strone')) 
		\cont\enc\strone  \\
		& \tobdet^2 &
		\cont(\lambda \var_1.\ldots.\lambda \var_{\size\alpone}.\lambda 
		{\var_\varepsilon}.\var_{i_\elem}\enc{\strone}) \\
		& = &
		\cont \enc{(\elem \strone)}.
	\end{array}$
	\end{center}
	
\end{proof}

\subsection{Binary Arithmetic}

In order to navigate the input word, we consider a counter (in binary). Moving 
the head left (respectively right) amounts to decrement (respectively 
increment) the counter by one. The starting idea is to see a number as its 
binary string representation and to use the Scott encoding of strings. Since it 
is tricky to define the successor and predecessor on such an encoding, we 
actually define an ad-hoc encoding.

The first unusual aspect of our encoding is that the binary string is 
represented in reverse order, so that the representation of $2$ is $01$ and not 
$10$. This is done to ease the definition of the successor and predecessor 
functions as $\l$-terms, which have to process strings from left to right, and 
that with the standard representation would have to go to the end of the string 
and then potentially back from right to left. With a reversed representation, 
these functions need to process the string only once from left to right.

The second unusual aspect is that, in order to avoid problems with strings made 
out of all $0$s and strings having many $0$s on the right (which are not 
meaningful), we collapse all suffixes made out of all $0$ on to the empty 
string. A consequence is that the number $0$ is then represented with the empty 
string. Non-rightmost $0$ bits are instead represented with the usual Scott 
encoding.

If $n\in\nat$ we write $\ntostr n$ for the binary string representing $n$. Then 
we have:

\[
\begin{array}{lll}
	\ntostr 0 &\defeq&\ems
	\\[3pt]
	\ntostr 1 &\defeq&1
	\\[3pt]
	\ntostr 2 &\defeq& 01
	\\[3pt]
	\ntostr 3 &\defeq& 11
	\\[3pt]
	\ntostr 4 &\defeq& 001
	\\
\end{array}
\]
And so on. Binary strings are then encoded as $\l$-terms using the Scott 
encoding, as follows:
\[
\begin{array}{lll}
	\enc{\ems}&\defeq&\la{\var_0}{\la{\var_1}{\la{\var_\varepsilon}{\var_\varepsilon}}}
	\\[3pt]
	\enc{0\cons\str}&\defeq&\la{\var_0}{\la{\var_1}{\la{\var_\varepsilon}\var_0\enc{\str}}}
	\\[3pt]
	\enc{1\cons\str}&\defeq&\la{\var_0}{\la{\var_1}{\la{\var_\varepsilon}\var_1\enc{\str}}}
	\\
\end{array}
\]

\paragraph{Successor Function.} The successor function $\succf$ on the reversed 
binary representation can be defined as follows (in Haskell-like syntax):
\[
\begin{array}{llll}
	\succf &\ems& = &1\\
	\succf &0\cons\str& = &1\cons\str\\
	\succf &1\cons\str& = &0\cons(\succf\,\str)\\
\end{array}
\]

For which we have $\succf(\ntostr n) = \ntostr{n+1}$.

\begin{lemma}
	\label{l:succ}
	There is 
	a $\l$-term $\succl$ 
	such that for every
	continuation term $\cont$ and every natural number $\counter\in\nat$, 
	$$
	\succl\, \cont  \enc{\ntostr\counter}
	\tobdet^{\bigo{\log\counter}} 
	\cont\, \enc{\succf\, \ntostr\counter}.
	$$
\end{lemma}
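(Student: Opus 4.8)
The plan is to define $\succl$ by recursion via Turing's fixpoint combinator $\fix$, following the three-case structure of $\succf$. Concretely, I would set $\succl \defeq \fix\, S$ for an auxiliary term $S$ that takes (a copy of) the recursive call, a continuation $\cont$, and the encoded string $\enc{\ntostr\counter}$, and then \emph{scrutinizes} the string by applying it to three branches. Since $\enc{\ntostr\counter}$ is Scott-encoded, the term $\enc\str$ is itself the case-analysis operator: $\enc\str\, \tm_0\, \tm_1\, \tm_\varepsilon$ selects $\tm_\varepsilon$ when $\str = \ems$, reduces to $\tm_0\,\enc{\str'}$ when $\str = 0\cons\str'$, and to $\tm_1\,\enc{\str'}$ when $\str = 1\cons\str'$. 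The three branches implement the equations: the $\varepsilon$-branch returns $\cont\,\enc{1}$; the $0$-branch takes $\enc{\str'}$ and returns $\cont\,\enc{1\cons\str'}$ (prepending a $1$, using a helper like $\appendalpchar{}{}$ from Lemma~\ref{l:append-char} adapted to the binary alphabet); and the $1$-branch takes $\enc{\str'}$ and must recursively compute the successor of $\str'$ and then prepend a $0$, i.e.\ it calls the recursive copy of $\succl$ with a \emph{new} continuation that prepends $0$ to the result before passing it to $\cont$.

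The proof of correctness then proceeds by induction on the length $|\ntostr\counter|$ of the binary string, which is $\Theta(\log\counter)$. First I would establish the unfolding behaviour: by the fixpoint property, $\fix\, S \tobdet^2 S\,(\la\varthree \fix\, S\, \varthree)$, so $\succl\,\cont\,\enc{\ntostr\counter}$ reduces in $\bigo1$ steps to the term that applies $\enc{\ntostr\counter}$ to the three branches. For the base cases ($\str = \ems$ and $\str = 0\cons\str'$) no recursive call fires, so the reduction terminates in $\bigo1$ steps, matching $\cont\,\enc{\succf\,\ntostr\counter}$ after the constant-time character-prepending. For the inductive case $\str = 1\cons\str'$, the $1$-branch triggers a recursive call $\succl\,\cont'\,\enc{\str'}$ where $\cont' = \la\varthree \cont\,(\text{prepend-}0\,\varthree)$; the induction hypothesis gives that this reduces in $\bigo{\log\counter'}$ steps to $\cont'\,\enc{\succf\,\str'}$, which then prepends the $0$ in $\bigo1$ steps. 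Summing, each of the at most $|\ntostr\counter| = \bigo{\log\counter}$ recursive unfoldings contributes $\bigo1$ steps, giving the claimed $\bigo{\log\counter}$ total. One must check the arithmetic identity $\succf(\ntostr\counter) = \ntostr{(\counter+1)}$ holds at the level of strings, which is immediate from the defining equations and the reversed-binary convention.

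I would be slightly careful about two modelling points. The first is that all the branch terms and the continuation $\cont'$ must be genuine \emph{values} in $\detLam$ and all arguments in applications must be values, since the deterministic calculus forbids non-value arguments; prepending-$0$ helpers and the $\eta$-expanded recursive call $\la\varthree\fix\,S\,\varthree$ are abstractions, hence values, so this is fine, but it is the kind of invariant that needs stating. The second, which I expect to be the \textbf{main obstacle}, is the bookkeeping of the additive $\bigo1$ overheads across the recursion so that they genuinely sum to $\bigo{\log\counter}$ rather than something larger: one must verify that the continuation $\cont'$ passed down does not itself grow in a way that inflates the per-step cost, and that the ``prepend'' operations used in the branches really cost $\bigo1$ each independently of the recursion depth (this is exactly what Lemma~\ref{l:append-char} provides, so reusing it cleanly is the key). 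Everything else is a routine structural induction mirroring the three Haskell-style clauses defining $\succf$.
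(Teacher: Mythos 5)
Your proposal matches the paper's proof essentially step for step: the paper also defines $\succl$ as a fixpoint of an auxiliary term whose body applies the Scott-encoded string to three branches (empty returns $\cont\,\enc{1\cons\ems}$, the $0$-branch prepends $1$ via the constant-time append of Lemma~\ref{l:append-char}, and the $1$-branch recurses with the continuation $\la\varthree\appendchar0\cont\varthree$), and correctness is proved by the same induction on $\size{\ntostr\counter}$ with $\bigo{1}$ steps per unfolding. Your worry about the growing continuation is handled exactly as you anticipate: Lemma~\ref{l:append-char} costs $\bigo{1}$ steps independently of the continuation's size, so the per-level overhead stays constant.
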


\begin{proof}
Define 
$\succl \defeq \Theta\succaux$ and
$\succaux \defeq \la\varf{\la {\cont'}{\la {\counter'}{\counter'N_0N_1N_{\ems} \varf \cont'}}}$
where:
\begin{itemize}
	\item $N_0\defeq \la{\varf'}\la {\str'}{\la {\cont'}{\appendchar1\cont' \str'}}$
	\item $N_1\defeq \la{\varf'}\la {\str'}{\la {\cont'}{\varf'(\la\varthree\appendchar0\cont' \varthree) \str'}}$
	\item $N_{\ems}\defeq \la{\varf'}\la {\cont'} \cont'\, \enc{1\cons\ems}$
\end{itemize}
We rather prove $
	\succl\, \cont  \enc{\ntostr\counter}
	\tobdet^{\bigo{\size{\ntostr\counter}}} 
	\cont\, \enc{\succf\, \ntostr\counter}.
	$, where clearly $\size{\ntostr\counter} = \log \counter$, because the proof is naturally by induction on the length of $\ntostr\counter$ as a string. The first steps of the evaluation of $\succl\, k \enc{\ntostr\counter}$ are common to all natural numbers $\counter\in\nat$:
\[\begin{array}{llllllll}
\succl\, \cont \enc{\ntostr\counter} & = & \fix \succaux\, \cont \enc{\ntostr\counter}
\\ & \tob^{2} &
\succaux (\la\varthree \succl\, \varthree) \cont \enc{\ntostr\counter}
\\ & = &
(\la\varf{\la {\cont'}{\la {\counter'}{\counter'N_0N_1N_{\ems} \varf \cont'}}}) (\la\varthree \succl\,\varthree) \cont \enc{\ntostr\counter}
\\ & \tob^{3} &
\enc{\ntostr\counter} N_0N_1N_{\ems} (\la\varthree \succl\,\varthree) \cont 
\end{array}\]

Cases of $\counter$:
\begin{itemize}
\item \emph{Zero}, that is, $\counter =0 $, $\ntostr\counter = \ems$, and $\enc{\ntostr\counter} = \la{\var_0}{\la{\var_1}{\la{\var_\varepsilon}{\var_\varepsilon}}}$: then  
\[\begin{array}{llllllll}
\enc{\ntostr\counter} N_0N_1N_{\ems} (\la\varthree \succl\,\varthree) \cont 
& = & 
(\la{\var_0}{\la{\var_1}{\la{\var_\varepsilon}{\var_\varepsilon}}}) N_0N_1N_{\ems} (\la\varthree \succl\,\varthree) \cont \\ & \tob^{3} &
N_{\ems} (\la\varthree \succl\,\varthree) \cont
\\ & = &
(\la{\varf'} \la {\cont'} \cont'\,\enc{1\cons\ems}) \cont
\\
 & \tob & \cont\, \enc{1\cons\ems}
 \\ & = &  \cont\, \enc{\succf\, \ntostr 0}
\end{array}\]

\item \emph{Not zero}. Then there are two sub-cases, depending on the first character of the string $\ntostr\counter$:
\begin{itemize}
\item \emph{$0$ character}, \ie $\ntostr\counter = 0\cons\str$: then 
\[\begin{array}{rlllllll}
&&\enc{0\cons\strtwo} N_0N_1N_{\ems} (\la\varthree \succl\,\varthree) \cont \\
& = & 
(\la{\var_0}{\la{\var_1}{\la{\var_\varepsilon}\var_0\enc{\str}}}) N_0N_1N_{\ems} (\la\varthree \succl\,\varthree) \cont 
\\ & \tob^{3} &
N_{0} \enc{\str} (\la\varthree \succl\,\varthree) \cont 
\\ & = &
(\la{\varf'}\la {\str'}{\la {\cont'}{\appendchar1\cont' \str'}}) \enc{\str} (\la\varthree \succl\,\varthree) \cont 
\\ & \tob^{3} & 
 \appendchar1\cont \enc{\str}
\\ (\reflemmaeq{append-char}) & \tob^{\bigo{1}} &  
\cont\, \enc{1\cons\str}
\\ & = &  \cont\, \enc{\succf\, \ntostr\counter}
\end{array}\]

\item \emph{$1$ character}, \ie $\ntostr\counter = 1\cons\str$: then 
\[\begin{array}{rlllllll}
&&\enc{1\cons\strtwo} N_0N_1N_{\ems} (\la\varthree \succl\,\varthree) \cont \\
& = & 
(\la{\var_0}{\la{\var_1}{\la{\var_\varepsilon}\var_1\enc{\str}}}) N_0N_1N_{\ems} (\la\varthree \succl\,\varthree) \cont 
\\ & \tob^{3} &
N_{1} \enc{\str} (\la\varthree \succl\,\varthree) \cont 
\\ & = &
(\la{\varf'}\la {\str'}{\la {\cont'}{\varf'(\la\varthree\appendchar0\cont'\varthree) \str'}}) \enc{\str} (\la\varthree \succl\,\varthree) \cont 
\\ & \tob^{3} & 
(\la\varthree \succl\,\varthree)\,(\la\varthree\appendchar0\cont\varthree)\enc{\str}
\\ & \tob & 
 \succl\,(\la\varthree\appendchar0\cont\varthree)\enc{\str}
\\ (\ih) & \tob^{\bigo{\size\str}} &  
(\la\varthree\appendchar0\cont\varthree)\enc{\succf\,\str}
\\ & \tobdet &
\appendchar0\cont \enc{\succf\,\str}
\\ (\reflemmaeq{append-char}) & \tob^{\bigo{1}} &  \cont\, \enc{ 0\cons(\succf\,\str)}
\\ & = &  \cont\, \enc{\succf\, (1\cons\str)}
\\ & = &  \cont\, \enc{\succf\, \ntostr\counter}
\end{array}\]
\end{itemize}\qedhere
\end{itemize}
\end{proof}

\paragraph{Predecessor Function.} We now define and implement a predecessor 
function. We define it assuming that it shall only be applied to the enconding 
$\ntostr n$ of a natural number $n$ different from $0$, as it shall indeed be 
the case in the following. Such a predecessor function $\predf$ is defined as 
follows on the reversed binary representation (in Haskell-like syntax):
\[
\begin{array}{lllll}
	\predf &0\cons\str& = &1\cons(\predf\,\str)\\
	\predf &1\cons\ems& = &\ems\\
	\predf &1\cons b\cons\str& = &0\cons b\cons\str\\
\end{array}
\]
It is easily seen that $\predf (\ntostr n) = \ntostr{n-1}$ for all 
$0<n\in\nat$. Note that $\predf(\ntostr n)$ does not introduce a rightmost $0$ 
bit when it changes the rightmost bit of $\ntostr n$, that is, $\predf\, 001 = 
11$ and not $110$.

\begin{lemma}
	\label{l:pred}
	There is a $\l$-term $\predl$ such that for every continuation term $\cont$ and 
	every natural number $1\leq\counter\in\nat$, 
	$$
	\predl\, \cont  \enc{\ntostr\counter}
	\tobdet^{\bigo{\log\counter}} 
	\cont\, \enc{\predf\, \ntostr\counter}.
	$$
\end{lemma}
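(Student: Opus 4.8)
The plan is to mirror the proof of Lemma~\ref{l:succ} almost line by line, keeping the same fixpoint-plus-dispatch skeleton and only redesigning the three branch terms according to the clauses of $\predf$. Concretely, I would set $\predl \defeq \Theta\, A$ with $A \defeq \la\varf{\la{\cont'}{\la{\counter'}{\counter'\, P_0\, P_1\, P_\ems\, \varf\, \cont'}}}$, so that applying the Scott-encoded counter $\counter'$ to its three branches dispatches on the leading bit exactly as $\succaux$ does. It is convenient to first merge the last two clauses of $\predf$ into the single equation stating that $\predf(1\cons\str)$ equals $\ems$ when $\str=\ems$ and equals $0\cons\str$ otherwise; this exposes that the $1$-branch must itself inspect the tail. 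As for the successor, I would prove the sharper statement $\predl\,\cont\,\enc{\ntostr\counter} \tobdet^{\bigo{\size{\ntostr\counter}}} \cont\,\enc{\predf\,\ntostr\counter}$ and then conclude via $\size{\ntostr\counter}=\log\counter$, the argument going by induction on the length of $\ntostr\counter$.

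The branches are read off directly from the clauses of $\predf$. The branch $P_0$ realizes the single recursive clause $\predf(0\cons\str)=1\cons(\predf\,\str)$ and has exactly the shape of $N_1$ in Lemma~\ref{l:succ}: it fires the recursive call $\varf$ on the tail under a continuation that appends a $1$, i.e. $P_0 \defeq \la{\varf'}\la{\str'}\la{\cont'}\,\varf'(\la\varthree\appendchar1\cont'\varthree)\str'$. The branch $P_\ems$ need not compute anything useful: since $\counter\geq 1$ and $\ntostr{\cdot}$ carries no trailing $0$s, a $0$-led string always denotes a positive number, so the recursion triggered by $P_0$ strictly shortens the string while staying on valid, positive codes and must bottom out on a $1$-led string; hence the empty tail is never reached along the recursion and $P_\ems$ is dead code. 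The delicate branch is $P_1$, which has to discharge both base clauses at once: given the tail $\str$ it must return $\cont'\,\enc{\ems}$ when $\str=\ems$ and $\cont'\,\enc{0\cons\str}$ otherwise. I would implement it by a \emph{secondary} Scott dispatch on $\enc\str$, whose empty sub-branch yields $\cont'\,\enc{\ems}$ and whose two non-empty sub-branches, after destructuring $\str$ into $c\cons\str'$, rebuild $\enc{0\cons c\cons\str'}$ by appending first $c$ and then $0$ through two uses of Lemma~\ref{l:append-char}.

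With the branches fixed, the computation follows the template of Lemma~\ref{l:succ}. The opening steps $\predl\,\cont\,\enc{\ntostr\counter} = \fix\,A\,\cont\,\enc{\ntostr\counter} \tob^{2} A(\la\varthree\predl\,\varthree)\cont\,\enc{\ntostr\counter} \tob^{3} \enc{\ntostr\counter}\,P_0\,P_1\,P_\ems\,(\la\varthree\predl\,\varthree)\,\cont$ are common to every $\counter$, after which the leading bit selects a branch. In the $0$-case I would invoke the induction hypothesis on the strictly shorter tail $\str$, which by the no-trailing-$0$ invariant still denotes a positive number, paying only $\bigo 1$ for the wrapped append of $1$ once the recursive call returns, exactly as in the $1$-case of the successor. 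In the $1$-case there is no recursive call: the secondary dispatch together with at most two constant-time appends costs $\bigo 1$. Summing the $\bigo 1$ overhead over the at most $\size{\ntostr\counter}$ unfoldings of the recursion yields the bound $\bigo{\size{\ntostr\counter}}=\bigo{\log\counter}$.

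I expect $P_1$ to be the only real obstacle, for two reasons. First, rebuilding $0\cons\str$ inside $P_1$ requires the secondary case analysis and the re-prepending of the very bit $c$ that this analysis consumes, so the continuations have to be threaded carefully and the extra $\beta$-steps tallied to stay within $\bigo 1$. Second, correctness rests on the no-trailing-$0$ invariant: it is what simultaneously guarantees that $P_\ems$ never fires along the recursion and that $P_1$ outputs a canonical code, matching the remark that $\predf\,001=11$ rather than $110$. I would therefore state this invariant explicitly and thread it through the induction rather than leave it tacit.
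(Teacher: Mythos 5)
Your proposal is correct and follows essentially the same route as the paper's proof: the same fixpoint-plus-Scott-dispatch skeleton, a $0$-branch that recurses on the tail under a continuation appending $1$, a $1$-branch realized by a secondary dispatch on the tail (empty $\mapsto \enc\ems$, otherwise two constant-time appends rebuilding $0\cons c\cons\str'$), a dead $\ems$-branch, and induction on $\size{\ntostr\counter}$ for the $\bigo{\log\counter}$ bound. Your explicit threading of the no-trailing-$0$ invariant (ensuring the recursive call stays on codes of positive numbers) is a point the paper leaves tacit, but it is the same argument.
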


\begin{proof}
Define 
$\predl \defeq \fix\predaux$ and $\predaux \defeq \la\varf{\la {\cont'}{\la {\counter'}{\counter' N_0N_1N_\ems \varf\cont'}}}$ 
where:
\begin{itemize}
	\item $N_0\defeq \la {\strtwo'}\la\varf\la {\cont'}{\varf(\la\varthree\appendchar1\cont'\varthree)\strtwo'}$;
	\item $N_1\defeq \la {\strtwo'}\la\varf{{\strtwo'}M_{0}M_{1}M_\ems}$, where:
	\begin{itemize}
	\item $M_0\defeq \la v{\la k{\appendchar0(\la\varthree\appendchar1\cont\varthree) v}}$;
	\item $M_1\defeq \la v{\la k{\appendchar1(\la\varthree\appendchar1\cont\varthree) v}}$;
	\item $M_\ems\defeq \la {\cont'} \cont'\enc\ems$;
\end{itemize}

	\item $N_{\ems}$ is whatever closed term.
\end{itemize}
We rather prove $
	\predl\, \cont  \enc{\ntostr\counter}
	\tobdet^{\bigo{\size{\ntostr\counter}}} 
	\cont\, \enc{\predf\, \ntostr\counter}.
	$, where clearly $\size{\ntostr\counter} = \log \counter$, because the proof is naturally by induction on the length of $\ntostr\counter$ as a string. The first steps of the evaluation of $\predl\, \cont \enc{\ntostr\counter}$ are common to all natural numbers $1\leq \counter\in\nat$:
\[\begin{array}{llllllll}
\predl\, \cont \enc{\ntostr\counter} & = & \fix \predaux\, \cont \enc{\ntostr\counter}
\\ & \tob^{2} &
\predaux (\la\varthree \predl\, \varthree) \cont \enc{\ntostr\counter}
\\ & = &
(\la\varf{\la {\cont'}{\la {\counter'}{\counter' N_0N_1N_\ems \varf\cont'}}}) (\la\varthree \predl\, \varthree) \cont \enc{\ntostr\counter}
\\ & \tob^{3} &
\enc{\ntostr\counter} N_0 N_1N_\ems (\la\varthree \predl\, \varthree) \cont 
\end{array}\]

By hypothesis, $\counter\geq 1$. Then $\enc\counter$ is a non-empty string. Cases of its first character:
\begin{itemize}

\item \emph{$0$ character}, \ie ${\ntostr\counter} = 0\cons\strtwo$: then 
\[\begin{array}{rlllllll}
&&\enc{\ntostr\counter} N_0 N_1N_\ems (\la\varthree \predl\, \varthree) \cont 
\\
& = & 
(\la{\var_0}{\la{\var_1}{\la{\var_\varepsilon}\var_0\enc{\strtwo}}}) N_0 N_1N_\ems (\la\varthree \predl\, \varthree) \cont 
\\ 
& \tob^{3} &
N_0 \enc{\strtwo} (\la\varthree \predl\, \varthree) \cont
\\ & = &
(\la {\strtwo'}\la\varf\la {\cont'}{\varf(\la\varthree\appendchar1\cont'\varthree)\strtwo'}) \enc{\strtwo} (\la\varthree \predl\, \varthree) \cont
\\ & \tob^{3} & 
(\la\varthree \predl\, \varthree) (\la\varthree\appendchar1\cont'\varthree) \enc{\strtwo}
\\ & \tob & 
\predl (\la\varthree\appendchar1\cont\varthree) \enc{\strtwo}
\\ (\ih) & \tob^{\bigo{\size\strtwo}} &  
(\la\varthree\appendchar1\cont\varthree)\, \enc{\predf\, \strtwo}
\\  & \tob &  
\appendchar1\cont\, \enc{\predf\, \strtwo}
\\ (\reflemmaeq{append-char}) & \tob^{\bigo{1}} &  
\cont\, \enc{1\cons (\predf\, \strtwo)}
\\ & = &  \cont\, \enc{\predf\, {0\cons\strtwo}}
\\ & = &  \cont\, \enc{\predf\, {\ntostr\counter}}
\end{array}\]

\item \emph{$1$ character}, \ie ${\ntostr\counter} = 1\cons\strtwo$: then 
\[\begin{array}{rlllllll}
\enc{\ntostr\counter} N_0 N_1N_\ems (\la\varthree \predl\, \varthree) \cont 
& = & 
(\la{\var_0}{\la{\var_1}{\la{\var_\varepsilon}\var_1\enc{\strtwo}}}) N_0 N_1N_\ems (\la\varthree \predl\, \varthree) \cont 
\\ 
& \tob^{3} &
N_1 \enc{\strtwo} (\la\varthree \predl\, \varthree) \cont
	\\ & = &
	(\la {\strtwo'}\la\varf{{\strtwo'}M_{0}M_{1}M_\ems})\enc{\strtwo} (\la\varthree \predl\, \varthree) \cont\\
	&\tob& \enc{\strtwo}M_0M_1M_\ems \cont
	\end{array}\]
	
	There are three sub-cases, depending on the string $\strtwo$:
	\begin{itemize}
	\item $\strtwo$ is empty, \ie $\strtwo=\ems$. Then:
	\[\begin{array}{rlllllll}
	\enc{\ems}M_0M_1M_\ems \cont &=&
	(\la{\var_0}{\la{\var_1}{\la{\var_\varepsilon}{\var_\varepsilon}}})M_0M_1M_\ems k\\
	&\tob^3&M_\ems k
	\\ & = & 
	(\la {\cont'} \cont'\enc\ems)\cont
	\\  & \tob &  \cont \enc{ \ems}
		\\  & \tob &  \cont \enc{ \ntostr 0}
	\\ & = &  \cont \enc{\predf\, {\ntostr 1}}
\end{array}\]

\item \emph{$\strtwo$ start with $0$}, that is, 
$\ntostr\counter=1\cons\strtwo= 1\cons0\cons\strthree$. Then:
\[\begin{array}{rlllllll}
\enc{0\cons\strthree}\,M_0M_1M_\ems k&=&
(\la{\var_0}{\la{\var_1}{\la{\var_\varepsilon}\var_{0}\enc\strthree}})M_0M_1M_\ems k\\
&\tob^3&M_0 \enc\strthree k
\\
&=&(\la v{\la k{\appendchar0(\la\varthree\appendchar0\cont\varthree) v}})\enc\strthree k
\\&\tob^2&
\appendchar0 (\la\varthree\appendchar0\cont\varthree) \enc\strthree
\\ (\reflemmaeq{append-char}) & \tob^{\bigo{1}} &  (\la\varthree\appendchar0\cont\varthree) \enc{ 0\cons\strthree}
\\ & \tob &  \appendchar0\cont\, \enc{ 0\cons\strthree}
\\ (\reflemmaeq{append-char}) & \tob^{\bigo{1}} &  
\cont\, \enc{0\cons0\cons\strthree}
\\ & = &  \cont\, \enc{0\cons\strtwo}
\\ & = &  \cont\, \enc{\predf\, {1\cons\strtwo}}
\\ & = &  \cont\, \enc{\predf\, {\ntostr\counter}}
\end{array}\]

\item \emph{$\strtwo$ start with $1$}, that is, 
$\ntostr\counter=1\cons\strtwo= 1\cons1\cons\strthree$. Then:
\[\begin{array}{rlllllll}
\enc{0\cons\strthree}\,M_0M_1M_\ems k&=&
(\la{\var_0}{\la{\var_1}{\la{\var_\varepsilon}\var_{1}\enc\strthree}})M_0M_1M_\ems k\\
&\tob^3&M_1 \enc\strthree k
\\
&=&(\la v{\la k{\appendchar1(\la\varthree\appendchar0\cont\varthree) v}})\enc\strthree k
\\&\tob^2&
\appendchar1 (\la\varthree\appendchar0\cont\varthree) \enc\strthree
\\ (\reflemmaeq{append-char}) & \tob^{\bigo{1}} &  (\la\varthree\appendchar0\cont\varthree) \enc{ 1\cons\strthree}
\\ & \tob &  \appendchar0\cont\, \enc{ 1\cons\strthree}
\\ (\reflemmaeq{append-char}) & \tob^{\bigo{1}} &  
\cont\, \enc{0\cons1\cons\strthree}
\\ & = &  \cont\, \enc{0\cons\strtwo}
\\ & = &  \cont\, \enc{\predf\, {1\cons\strtwo}}
\\ & = &  \cont\, \enc{\predf\, {\ntostr\counter}}
\end{array}\]
\qedhere
\end{itemize}
\end{itemize}
\end{proof}

\paragraph{Lookup Function.} Given a natural number $\counter$, we need to be 
able to extract the $\counter+1$-th character from a non-empty string $\str$. 
The partial function $\lookupf$ can be defined as follows (in Haskell-like 
syntax):
\[
\begin{array}{llllll}
	\lookupf &\ntostr 0&(c\cons\str)& = &c\\
	\lookupf &\ntostr n&(c\cons\str)& = &\lookupf\ (\predf\, \ntostr n)\ \str & 
	\mbox{if }n>0\\
\end{array}
\]

\begin{lemma}\label{l:look}
	There is a $\l$-term $\lookupl$ such that for every continuation term $\cont$, 
	every natural number $n$ and every non-empty string $\inputstr\in\Bool^+$, 
	$$\lookupl\,k\enc{\ntostr n}\enc{\inputstr} \ \ \tobdet^{\bigo{n\log n}} \ 
	\ k\,\cods{\lookupf\, \ntostr n \inputstr}.$$
\end{lemma}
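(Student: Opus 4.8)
The plan is to set $\lookupl \defeq \fix\,H$ for a step term
$H \defeq \la\varf\la{\cont'}\la{\counter'}\la{\str'}\,(\counter'\,B_0\,B_1\,B_\ems)$
and to argue by induction on $\counter$, the quantity that the recursive calls of $\lookupf$ decrease via $\predf$. Thus $H$ receives the recursive call $\varf$, a continuation $\cont'$, the counter $\counter'$, and the input string $\str'$, and it immediately inspects $\counter'$ through the Scott pattern matching $\counter'\,B_0\,B_1\,B_\ems$ in order to detect whether $\counter = 0$, \ie whether $\ntostr\counter = \ems$. This test is \emph{exact}: since trailing $0$s are collapsed, the reversed binary string is empty precisely when $\counter = 0$, so the empty branch $B_\ems$ is selected exactly in the base case of $\lookupf$.

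The genuinely delicate point is that the pattern matching that tests emptiness also \emph{consumes} $\counter'$, handing the non-empty branches $B_0, B_1$ only the \emph{tail} of $\ntostr\counter$; yet to decrement with $\predl$ I need the \emph{whole} counter. I resolve this by exploiting that substitution in the $\l$-calculus freely duplicates variables: $B_0$ and $B_1$ discard the tail passed by the pattern matching and refer instead to the outer binder $\counter'$, still in scope, which the initial $\beta$-steps replace by the full encoding $\enc{\ntostr\counter}$ inside every branch. Concretely, the empty branch $B_\ems$ Scott-pattern-matches $\str'$, reads off its first character $c$, and returns $\cont\,\cods{c}$ in $\bigo{1}$ steps, matching $\lookupf\,\ntostr 0\,(c\cons\strtwo) = c$. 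The non-empty branches instead chain three actions by nesting continuations: call $\predl$ on $\counter'$ to obtain $\enc{\predf\,\ntostr\counter} = \enc{\ntostr{\counter-1}}$ by $\reflemmaeq{pred}$ in $\bigo{\log\counter}$ steps; pattern-match $\str'$ to drop its head and expose its tail $\enc{\strtwo}$; and finally fire the recursive call $\varf\,\cont\,\enc{\ntostr{\counter-1}}\,\enc{\strtwo}$, which unfolds to $\lookupl\,\cont\,\enc{\ntostr{\counter-1}}\,\enc{\strtwo}$. This mirrors $\lookupf\,\ntostr\counter\,(c\cons\strtwo) = \lookupf\,(\predf\,\ntostr\counter)\,\strtwo$.

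Correctness then follows by induction on $\counter$, exactly as for $\succl$ and $\predl$: the base case $\counter = 0$ is the empty branch and halts in a constant number of steps, while the inductive case feeds the decremented counter $\enc{\ntostr{\counter-1}}$ and the shortened string $\enc{\strtwo}$ to the induction hypothesis. Throughout I rely on the invariant, implicit in the partiality of $\lookupf$, that the string stays long enough for the lookup to be defined (\ie $\size{\inputstr} > \counter$); this is preserved because each layer shortens the string by exactly one character while decrementing the counter by one.

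For the cost I would avoid collapsing the $\bigo{\cdot}$ through the induction, where the hidden constant might accumulate over the $\counter$ layers, and instead unroll the recurrence. The computation has $\counter$ recursive layers, the one at counter value $i \le \counter$ performing a constant number of administrative $\beta$-steps plus one call to $\predl$ of cost $\bigo{\log i}$. Summing gives $\bigo{1} + \sum_{i=1}^{\counter}(\bigo{1} + \bigo{\log i}) = \bigo{\counter} + \bigo{\log(\counter!)} = \bigo{\counter\log\counter}$ by Stirling's approximation, as claimed. I expect the main obstacle to be not this accounting but the design of $H$: correctly interleaving the emptiness test, the $\predl$-decrement, and the one-character advance of the input inside a single recursive step, while keeping the full counter available for $\predl$ despite having pattern-matched on it.
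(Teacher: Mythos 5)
Your proposal is correct, and its skeleton coincides with the paper's: $\lookupl \defeq \fix\,H$ for a step term that Scott-pattern-matches the counter (the emptiness test is exact precisely because trailing $0$s are collapsed), handles the base case by matching the input string and returning its head, and in the inductive case decrements with $\predl$ ($\bigo{\log\counter}$ steps by \reflemma{pred}), drops one input character, and recurses, with the induction on $\counter$. Where you genuinely diverge is at the point you yourself flag as delicate: recovering the \emph{full} counter after the match has consumed it. You solve it by duplication — the branches $B_0,B_1$ contain free occurrences of the outer binder $\counter'$, so the initial $\beta$-steps substitute a copy of $\enc{\ntostr\counter}$ inside each branch, which then discards the tail it is handed. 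The paper instead keeps its branches $N_0,N_1$ closed, threading $\varf,\cont',\inputstr'$ explicitly as arguments after the match; since the full counter is then no longer available, it \emph{reconstructs} it from the tail $\strthree$ by re-appending the examined bit via the append term of \reflemma{append-char} before calling $\predl$. Both mechanisms are sound in $\detLam$ and cost $\bigo{1}$ extra steps per layer, so they yield the same bound; yours is arguably simpler (no reconstruction step), while the paper's keeps the case terms self-contained combinators. A second difference is in your favour: for the cost you unroll the recurrence and sum $\sum_{i=1}^{\counter}(\bigo{1}+\bigo{\log i}) = \bigo{\counter\log\counter}$ with a uniform constant coming from the single term $\predl$, which is more rigorous than the paper's inductive accounting ``$\bigo{\log\counter}+\bigo{(\counter-1)\log(\counter-1)}+h$ is $\bigo{\counter\log\counter}$'', where the hidden constant could in principle grow across the induction. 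One small point to make explicit if you write this up: the lemma quantifies over all $n$ and all non-empty $\inputstr$, so your invariant $\size\inputstr > \counter$ (definedness of the partial $\lookupf$) should be stated as a hypothesis carried through the induction, as the paper leaves it implicit too.
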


\begin{proof}

We can now code the function $\lookupl \defeq \fix \lookupaux$ where:
\[
\lookupaux \defeq \la\varf{\la {\cont'}{\la {\counter'}{\la {\inputstr'}{\counter'N_{0}N_{1}N_\ems \varf \cont' \inputstr'}}}}
\]
where:
\begin{itemize}
	\item $N_{0}\defeq \la{\strthree'}\la\varf\la{\cont'}\la{\inputstr'}\inputstr' M_{0} 
	M_{0} M_\ems \strthree' \varf \cont' $, where
	\begin{itemize}
	\item $M_{0}\defeq \la{\strtwo'}\la{\strthree'}\la\varf\la{\cont'} \appendchar0  (\la{\varthree''}\predl(\la{\varthree'}\varf\cont'\varthree' ) \varthree'') \strthree' \strtwo'$;
	\item $M_\ems$ is whatever closed term.
	\end{itemize}
	
	\item $N_{1}\defeq \la{\strthree'}\la\varf\la{\cont'}\la{\inputstr'}\inputstr' M_{1} M_{1} M_\ems  \strthree' \varf \cont'$, where
	\begin{itemize}
	\item $M_{1}\defeq \la{\strtwo'}\la{\strthree'}\la\varf\la{\cont'} \appendchar1  (\la{\varthree''}\predl(\la{\varthree'}\varf\cont'\varthree' ) \varthree'') \strthree' \strtwo'$;
	\item $M_\ems$ is whatever closed term.
	\end{itemize}
	
	\item $N_\ems\defeq \la\varf\la{\cont'}\la{\inputstr'}\inputstr' O_0 O_{1} 
	O_\stsym O_\ensym O_{\ems} \cont'$, where 
	
	\begin{itemize}
	\item $O_\bool \defeq \la{\str'}\la {\cont'} \cont' \cods{\bool}$;
	\item $O_\ems$ is whatever closed term.
	\end{itemize}
\end{itemize}
The first steps of the evaluation of $\lookupl\,\cont\enc{\ntostr \counter}\enc{\inputstr}$ are common to all strings $\inputstr\in\Bool^+$ and natural numbers $\counter\in\nat$:
\[\begin{array}{llllllll}
\lookupl\,\cont\enc{\ntostr \counter}\enc{\inputstr} & = & \fix \lookupaux\, k\enc{\ntostr \counter}\enc{\inputstr}
\\ & \tob^{2} &
\lookupaux (\la\varthree \lookupl\, \varthree) \cont\enc{\ntostr \counter}\enc{\inputstr}
\\ & = &
(\la\varf{\la {\cont'}{\la {\counter'}{\la {\inputstr'}{\counter'N_{0}N_{1}N_\ems \varf \cont' \inputstr'}}}}) (\la\varthree \lookupl\,\varthree)  \cont\enc{\ntostr \counter}\enc{\inputstr}
\\ & \tob^{4} &
\enc{\ntostr \counter} N_{0} N_{1} N_{\ems} (\la\varthree \lookupl\,\varthree) \cont \enc{\inputstr}
\end{array}\]
Cases of $n$:
\begin{itemize}
\item $n=0$, and so $\ntostr \counter = \ems$: then  
\[\begin{array}{llllllll}
&&\enc{\ems} N_{0}N_{1} N_{\ems} (\la\varthree \lookupl\,\varthree) \cont \enc\inputstr
\\ & = & 
(\la{\var_0}{\la{\var_1}{\la{\var_\varepsilon}{\var_\varepsilon}}}) N_{0}N_{1} N_{\ems} (\la\varthree \lookupl\,\varthree) \cont \enc\inputstr
{\var_\varepsilon}N_{\ems} (\la\varthree \lookupl\,\varthree) \cont \enc\inputstr
\\ & = &
(\la\varf\la{\cont'}\la{\inputstr'}\inputstr' O_0 O_{1} O_\stsym O_\ensym 
O_{\ems} \cont') (\la\varthree \lookupl\,\varthree) \cont \enc\inputstr
\\ & \tobdet^{3} &
\enc\inputstr O_0 O_{1} O_\stsym O_\ensym O_{\ems} \cont
\end{array}\]
Let $\inputstr$ start with $\bool\in \Bool$, that is, $\inputstr = \bool\cons\str$:
\[\begin{array}{llllllll}
\enc{\bool\cons\str}\, O_0 O_{1} O_\stsym O_\ensym O_{\ems} \cont
& = & 
(\la{\var_{0}}\la{\var_{1}}\la\var_{\ems}\var_{\bool} \enc\str) O_{0} O_{1} 
O_\stsym O_\ensym O_{\ems} \cont
\\ & \tobdet^{3} &
O_{\bool} \enc\str \cont
\\ & = &
(\la{\str'}\la {\cont'} \cont'\cods{\bool}) \enc\str \cont
\\ & \tob^{2} &
\cont \cods{\bool} 
\\ & = & 
\cont \cods{\lookupf\, \ems (\bool\cons\str)}
\\ & = & 
\cont \cods{\lookupf\, \ntostr 0 \istr}
\end{array}\]

\item \emph{Non-empty string starting with $0$}, that is, $n>0$ and $\ntostr \counter = 0 \cons \strthree$: then  
\[\begin{array}{llllllll}
&&\enc{\ntostr \counter} N_{0}N_{1}N_{\ems} (\la\varthree \lookupl\,\varthree) \cont \enc\inputstr
\\ & = & 
(\la{\var_0}{\la{\var_1}{\la{\var_\varepsilon} \var_{0} \enc{\strthree}}}) N_{0}N_{1}N_{\ems} (\la\varthree \lookupl\,\varthree) \cont \enc\inputstr
\\ & \tob^{3} &
N_{0} \enc{\strthree} (\la\varthree \lookupl\,\varthree) \cont \enc\inputstr
\\ & = &
( \la{\strthree'}\la\varf\la{\cont'}\la{\inputstr'}\inputstr' M_{0} 
	M_{0} M_\ems \strthree' \varf \cont' )\enc{\strthree} (\la\varthree \lookupl\,\varthree)\cont \enc\inputstr
\\ & \tob^{4} &
\enc\inputstr M_{0} M_{0} M_\ems \enc{\strthree} (\la\varthree \lookupl\,\varthree) \cont
\end{array}\]
Let $\inputstr$ start with $\bool\in \Bool$, that is, $\inputstr = \bool\cons\strtwo$:


\[\begin{array}{llllllll}
\enc{\bool\cons\strtwo} M_{0} M_{0} M_\ems \cont
\\
 = \\ 
(\la{\var_{0}}\la{\var_{1}}\la{\var_{\ems}} \var_{\bool} \enc\strtwo) M_{0} M_{0} M_\ems \enc{\strthree} (\la\varthree \lookupl\,\varthree) \cont
\\  \tobdet^{\size\alpone} \\
M_{0} \enc\strtwo\, \enc{\strthree} (\la\varthree \lookupl\,\varthree) \cont
\\  = \\
(\la{\strtwo'}\la{\strthree'}\la\varf\la{\cont'} \appendchar0  (\la{\varthree''}\predl(\la{\varthree'}\varf\cont'\varthree' ) \varthree'') \strthree' \strtwo') \enc\strtwo\, \enc{\strthree} (\la\varthree \lookupl\,\varthree)\cont
\\  \tobdet^{4} \\
\appendchar0  (\la{\varthree''}\predl(\la{\varthree'}(\la\varthree \lookupl\,\varthree)\cont\varthree' ) \varthree'')  \enc\strthree\, \enc\strtwo
\\ (\reflemmaeq{append-char}  \tobdet^{\bigo{1}} \\
(\la{\varthree''}\predl(\la{\varthree'}(\la\varthree \lookupl\,\varthree)\cont\varthree' ) \varthree'') \enc{0\cons\strthree}\, \enc\strtwo
\\  \tobdet \\
\predl(\la{\varthree'}(\la\varthree \lookupl\,\varthree)\cont\varthree' ) \enc{0\cons\strthree}\, \enc\strtwo
\\  = \\
\predl(\la{\varthree'}(\la\varthree \lookupl\,\varthree)\cont\varthree' ) \enc{\ntostr \counter}\, \enc\strtwo
\\ (\reflemmaeq{pred})  \tobdet^{\bigo{\log \counter}} \\
(\la{\varthree'}(\la\varthree \lookupl\,\varthree)\cont\varthree' ) \enc{\predf\, \ntostr \counter}\, \enc\strtwo
\\  \tobdet^{2} \\
\lookupl\, \cont\, \enc{\predf\,\ntostr \counter}\, \enc\strtwo
\\  = \\
\lookupl\, \cont\, \enc{\ntostr{ n-1}}\, \enc\strtwo
\\ (\ih)  \tobdet^{\bigo{(n-1)\cdot\log{(n-1)}}} \\
\cont (\cods{\lookupf\, \ntostr{n-1} \strtwo})
\\  = \\
\cont (\cods{\lookupf\, \ntostr \counter \str})
\end{array}\]

The number of $\beta$ steps then is $\bigo{\log\counter} + \bigo{(\counter-1)\cdot\log{(\counter-1)}} + h$ for a certain constant $h$, which is bounded by $\bigo{\counter\cdot\log{\counter}}$, as required.

\item \emph{Non-empty string starting with $1$}, that is, $n>0$ and $\ntostr \counter = 1 \cons \strthree$: same as the previous one, simply replacing $N_{0}$ with $N_{1}$, and thus $M_{0}$ with $M_{1}$. In particular, it takes the same number of steps.\qedhere
\end{itemize}
\end{proof}


\subsection{The New Encoding of Turing Machines}
\paragraph{Turing Machines.}
Let $\Boolin \defeq \set{0,1,\stsym,\ensym}$ and $\Boolb \defeq 
\set{0,1,\elemblank}$ where $\stsym$ and $\ensym$ delimit the input (binary) 
string, and  
$\elemblank$ is our notation for the blank symbol. 
A deterministic binary Turing machine $\mathcal{M}$ \emph{with input} is a 
tuple 
$(\States,\statein,\statefint,\statefinf,\delta)$ consisting of:
\begin{itemize}
	\item
	A finite set $Q=\{\state_1,\ldots,\state_m\}$ of \emph{states};
	\item
	A distinguished state $\statein\in Q$, called the \emph{initial
		state};
	\item
	Two distinguished states $\Statesfin\defeq\set{\statefint,\statefinf} 
	\subseteq \States$, 
	called the 
	\emph{final
		states}; 
	\item
	A partial \emph{transition function} 
	$\delta:\Boolin\times\Boolb\times\States\rightharpoonup 
	\{-1,+1,0\}\times\Boolb\times\{\leftarrow,\rightarrow,\downarrow\} \times 
	\States$
	such that
	$\delta(\bool,\elem,\state)$ is defined only if $\state\notin 
	\Statesfin$.
\end{itemize}
A configuration for $\mathcal{M}$ is a tuple
$$(\inputstr,\counter,\wstrl, \elem, \wstrr, \state) \in 
\Boolin^*\times\Nset\times\Boolb^*\times\Boolb\times\Boolb^*\times \States$$
where:
\begin{itemize}
	\item $\inputstr$ is the 
	immutable input string and is formed as 
	$\inputstr=\stsym\cons\str\cons\ensym,\ \str \in \Bool^*$;
	\item $\counter\in \Nset$ represents the position of the input head. It is 
	meant to be represented in binary (that is, as an element of $\Bool^*$), to 
	take space $\log \counter$, but for ease of reading we keep referring to it 
	as a number rather than as a string;
	\item $\wstrl \in \Boolb^*$ is the work tape on the left of the work head;
	\item $\elem \in \Boolb$ is the element on the cell of the work tape read 
	by the work 
	head;
	\item $\wstrr \in \Boolb^*$ is the work tape on the right of the work head;
	\item $\state \in \States$ is the state of the machine.
\end{itemize}
For readability, we usually write a configuration $(\inputstr,\counter,\wstrl, 
\elem, \wstrr, \state)$ as $(\inputstr,\counter \csep \wstrl, \elem , \wstrr 
\csep \state) $, separating the input components, the working components, and 
the current state.

Given an input string $\inputstr \in\Boolin^*$ (where 
$i=\stsym\cons\str\cons\ensym$ and $\str\in\Bool^*$) we define:
\begin{itemize}
	\item the \emph{initial configuration} $\initconfigs$
	for $\inputstr$ is $\initconfigs \defeq 
	(\inputstr,0 \csep \varepsilon,\elemblank,\varepsilon 
	\csep \statein)$,
	\item the \emph{final configuration}
	$\config_{\tt fin} \defeq 
	(\str,\counter \csep \wstrl,\elem,\wstrr \csep\state)$, where 
	$\state\in\Statesfin$.	
\end{itemize}
For readability, a transition, say, $\delta(\inputstr_{\counter},\elem,\state) 
= 
(-1,\elem',\leftarrow, \state')$, is usually written as $(-1 \csep 
\elem',\leftarrow \csep \state')$ to stress the three components corresponding 
to those of configurations (input, work, state).

As in Goldreich, we assume that the machine never scans the input beyond the 
boundaries of the input. This does not affects space complexity.

\emph{An example of transition}: if $\delta(\inputstr_{\counter},\elem,\state) 
= 
(-1 \csep \elem',\leftarrow \csep \state')$, then $\M$ 
evolves from $\config = (\inputstr,n \csep \wstrl\elem'',\elem,\wstrr \csep 
\state)$, 
where the $\counter$th character of $\inputstr$ is $\inputstr_{\counter}$, 
to 
$\configtwo = (\inputstr,n-1\csep \wstrl,\elem'',\elem'\wstrr \csep \state')$ 
and if 
the tape on the 
left of the work head is empty, \ie if $\config = 
(\inputstr,n \csep \varepsilon,\elem,\wstrr \csep \state)$, then the content of 
the 
new head cell 
is a blank symbol, that is, $\configtwo \defeq 
(\inputstr,n-1 \csep \varepsilon,\elemblank,\elem'\wstrr \csep \state')$. The 
same happens if the tape on the right of the work head is empty. If $\M$ has a 
transition 
from $\config$ to $\configtwo$ we write $\config \tomachtur \configtwo$. A 
configuration having as state a final state $\state\in\Statesfin$ is 
\emph{final} and cannot evolve.

A Turing machine $(\States,\statein,
\statefint,\statefinf,\delta)$ computes the function 
$f:\Bool^*\rightarrow\Bool$ 
in time $T:\Nset\rightarrow\Nset$ and space  $S:\Nset\rightarrow\Nset$
if for every $\istr\in\Bool^+$, the
initial configuration for $\istr$ evolves to a final configuration of state 
$\state_{f(\istr)}$ in $T(\size\istr)$ steps and using at most $S(\size\istr)$ 
cells on the work tape.

\paragraph{Encoding configurations.} A configuration 
$(\inputstr,\counter \csep \strone,\elem,\strtwo \csep\state)$ of a machine
$\M = (\States,\statein, \statefint,\statefinf,\delta)$ is 
represented by
the term
$$
\encp{(\inputstr,\counter \csep \wstrl,\elem,\wstrr \csep \state)}{\M}
\defeq
\lambda x. (x \encp{\inputstr}{\Bool^+}\encp{\ntostr \counter}{\Bool}\;
\encp{\wstrl^{\rop}}{\Boolb^*}\;\cod{\elem}{\Boolb}\;\encp{\wstrr}{\Boolb^*}\;\cod{\state}{\States}).
$$
where $\wstrl^{\rop}$ is the string $\wstrl$ with the elements in reverse 
order. 
We shall often rather write
$$
\enc{(\inputstr,\counter \csep \wstrl,\elem,\wstrr \csep \state)}
\defeq
\lambda x. (x\; \enc{\inputstr}\; \enc{\ntostr \counter}\;
\enc{\wstrl^{\rop}}\;\cods\elem \; \enc\wstrr \; \cods\state).
$$
letting the superscripts implicit. To ease the reading, we sometimes use the 
following  notation for tuples 
$\tuple{\tmfive,\tmsix \csep \tm,\tmtwo,\tmthree \csep \tmfour} \defeq 
\lambda x. (x \tmfive\tmsix\tm \tmtwo \tmthree \tmfour)$, so that 
$\enc{(\inputstr,\counter \csep \wstrl,\elem,\wstrr \csep \state)} = 
\tuple{\enc{\inputstr},\enc{\ntostr \counter} \csep \enc{\wstrl^{\rop}}, 
\cods\elem, 
	\enc\wstrr \csep \cods\state}$.


\paragraph{Encoding the transition functions} The transition function $\delta(\bool,\elem,\state)$ is implemented by looking up into a 3-dimensional table $T$ having for coordinates:
\begin{itemize}
\item \emph{Input}: the current bit $\bool$ on the input tape, which is actually retrieved 
from the input tape $i$ and the counter $n$ of the current input position,
\item \emph{Work}: the current character $\elem$ on the work tape, and 
\item \emph{State}: the current state $\state$, 
\end{itemize}
The transition function is encoded as a recursive $\l$-term $\transs$ taking as argument the encodings of $\istr$, and $\counter$---to retrieve $\bool$---and $\elem$ and $\state$. It works as follows:
\begin{itemize}
\item It first retrieves $\bool$ from $\counter$ and $\istr$ by applying the $\lookupl$ function;
\item It has a subterm $A_{\bool}$ for the four values of $\bool$. The right 
sub-term is selected by applying the encoding $\cods\bool$ of $\bool$ to 
$A_{0},A_{1},A_\stsym$ and $A_\ensym$.

\item Each $A_{\bool}$ in turn has a sub-term $B_{\bool,\elem}$ for every character $\elem\in \Boolb$, corresponding to the working tape coordinates. The right sub-term is selected by applying the encoding $\cods\elem$ of  the current character $\elem$ on the work tape to $B_{\bool,0}, B_{\bool,1}, B_{\bool, \elemblank}$.

\item Each $B_{\bool,\elem}$ in turn has a subterm $C_{\bool,\elem,\state}$ for every character $\state$ in $\States$. The right sub-term is selected by applying the encoding $\cods\state$ of  the current state $\state$ to $C_{\bool,\elem,\state_{1}}, \ldots, B_{\bool, \elem,\state_{\size\States}}$.

\item The subterm $C_{\bool,\elem,\state}$ produces the (encoding of the) next configuration according to the transition function $\delta$. If $\delta$ decreases (resp. increases) the counter for the input tape then $C_{\bool,\elem,\state}$ applies $\predl$ (resp. $\succl$) to the input counter and then applies a term corresponding to the required action on the work tape, namely:
\begin{itemize}
	\item $S$ (for \emph{stay}) if the head does not move. This case is easy, $S$ simply produces the next configuration.
	\item $L$ if it moves left. Let $\wstrl = \wstr\elem''$, $\elem'$ the element that the transition has to write and $\state'$ the new state. Then $L$ has a subterm $L_{\elem''}^{\elem',\state'}$ for each $\elem''\in\Boolb$ the task of which is to add $\elem'$ to the right part of the work tape, remove $\elem''$ from the left part of the work tape (which becomes $\wstr$), and make $\elem''$ the character in the work head position.
	\item $R$ if it moves right. Its structure is similar to the one of $L$.
\end{itemize}
In order to be as modular as possible we use the definition of $S$, $L$, and $R$ for the cases when the input head moves also for the cases where it does not move, even if this requires a useless (but harmless) additional update of the counter $\counter$.
\end{itemize}

Define the term 
	$\trans{\M}$, or 
	simply 
	$\transs$, as follows.
\[\begin{array}{rclll}
\transaux 
& \defeq & 
\lambda \var.\la{\cont'}\lambda \config'.\config'(\la{\istr'}\la{\counter'}\la{\wstrl'}\la{\elem'}\la{\wstrr'}\la{\state'}\lookupl\,K \istr' \counter')
\\

\\
\transs 
& \defeq &
\fix \transaux,
\end{array}\]
where:
\[\begin{array}{rcl}
T & \defeq & \la {\bool'} \bool'\, 
A_{0}A_{1}A_{\stsym}A_{\ensym}\elem'\state'\var \cont' \istr' \counter'\wstrl' 
\wstrr'
\\
A_{\bool} & \defeq & \la{\elem'} \elem' B_{\bool,0} B_{\bool,1} B_{\bool,\elemblank} 
\\\\
B_{\bool,\elem} & \defeq & \la{\state'}\state' C_{\bool,\elem,\state_{1}}\ldots C_{\bool,\elem,\state_{\size\States}}
\\

C_{\bool,\elem,\state}&\defeq&
\la\var \la{\cont'} \la{\istr'} \la{\counter'} \la{\wstrl'} \la{\wstrr'}
    \left\{
   \begin{array}{ll}
	\cont' \tuple{\istr',\counter' \csep \wstrl',\cods{\elem}, 
   \wstrr' \csep \cods{\state} }
      \hfill\mbox{if $\state\in\Statesfin$}\\[3pt]
      
 S \counter' 
      \hfill\mbox{\qquad if 
      $\delta(\bool,\elem,\state)=(0 \csep \elem',\downarrow \csep \state')$}\\[3pt]
      
   L \counter' 
      \hfill \mbox{if 
      $\delta(\bool,\elem,\state)=(0\csep \elem',\leftarrow \csep \state')$}
      \\[3pt]

   R \counter' 
      \hfill \mbox{if 
      $\delta(\bool,\elem,\state)=(0\csep \elem',\rightarrow \csep \state')$}\\[3pt]
  
 \predl  S \counter'
  \hfill\mbox{\qquad if 
  	$\delta(\bool,\elem,\state)=(-1\csep \elem',\downarrow \csep \state')$}\\[3pt]
  
\predl L \counter'
  \hfill \mbox{if 
  	$\delta(\bool,\elem,\state)=(-1\csep \elem',\leftarrow \csep \state')$
	}
	\\[3pt]

  \predl R \counter'
  \hfill \mbox{if 
  	$\delta(\bool,\elem,\state)=(-1\csep \elem',\rightarrow \csep \state')$}

   \\[3pt]
   
   \succl S \counter'
  \hfill\mbox{\qquad if 
  	$\delta(\bool,\elem,\state)=(+1\csep \elem',\downarrow \csep \state')$}\\[3pt]
  
\succl L \counter'
  \hfill \mbox{if 
  	$\delta(\bool,\elem,\state)=(+1\csep \elem',\leftarrow \csep \state')$
	}
	\\[3pt]

  \succl  R \counter'
  \hfill \mbox{if 
  	$\delta(\bool,\elem,\state)=(+1\csep \elem',\rightarrow \csep \state')$}
   \end{array}
    \right.
    
    \\[3pt]
    S & \defeq & \la {\counter''} \var\cont' \tuple{\istr',\counter''\csep 
  	\wstrl',\cods{\elem'}, \wstrr' \csep \cods{\state'}} 
    \\
    L & \defeq & \la{\counter''}\wstrl' L_0^{\state',\elem'} L_1^{\state',\elem'} L_{\elemblank}^{\state',\elem'}L_{\ems}^{\state',\elem'} \var \cont' \istr' \counter'' \wstrr'
    \\
    R & \defeq & \la{\counter''}\wstrr' R_0^{\state',\elem'} R_1^{\state',\elem'} R_{\elemblank}^{\state',\elem'}R_{\ems}^{\state',\elem'} \var \cont' \istr' \counter'' \wstrl'
    \\
    L_{\elem''}^{\state',\elem'}
      & \defeq &
      \la{\wstrl'} \la\var \la{\cont'} \la{\istr'} \la{\counter'}\appendchar{\elem'}(\lambda \wstrr'.\var\cont' 
\tuple{\istr',\counter' \csep \wstrl',\cods{\elem''},\wstrr' \csep \cods{\state'}})
\\[3pt]
    L_{\ems}^{\state',\elem'}
      & \defeq &
\la\var \la{\cont'} \la{\istr'} \la{\counter'}
       \appendchar{\elem'}((\la d \lambda \wstrr'.\var\cont' 
      \tuple{\istr',\counter'\csep d,\cods{\elemblank},\wstrr' \csep  
      \cods{\state'}})\enc\ems)
      \\[3pt]
          R_{\elem''}^{\state',\elem'}
           & \defeq &
      \la{\wstrr'} \la\var \la{\cont'} \la{\istr'} \la{\counter'}\appendchar{\elem'}(\lambda \wstrl'.\var\cont' 
\tuple{\istr',\counter' \csep \wstrl',\cods{\elem''},\wstrr' \csep \cods{\state'}})
\\[3pt]
    R_{\ems}^{\state',\elem'}
      & \defeq &
\la\var \la{\cont'} \la{\istr'} \la{\counter'}
       \appendchar{\elem'}((\la d \lambda \wstrl'.\var\cont' 
      \tuple{\istr',\counter'\csep \wstrl',\cods{\elemblank},d \csep  
      \cods{\state'}})\enc\ems)

\end{array}\]


\section{Time Correctness of the Encoding}
\paragraph{Turning the input string into the initial configuration.} The 
following lemma provides the term $\inits $ that builds the initial 
configuration.

\begin{lemma}[Turning the input string into the initial configuration]
	\label{l:init-config}
	Let $\M=(\States,\statein,
	\statefint,\statefinf,\delta)$ be a Turing machine. There is a term 
	$\init\M{}$, or 
	simply $\inits$, such that for every continuation term $\cont$ and for every input string $\inputstr\in\Boolin^*$ 
	(where 
	$i=\stsym\cons\str\cons\ensym$ and $\str\in\Bool^*$):
	$$\inits\, \cont\, \enc{\inputstr} \ \ \tobdet^{\Theta(1)} \ \
	\cont\, \enc\initconfigs$$ 
	where $\initconfigs$ is the initial configuration of $\M$ for $\inputstr$.
\end{lemma}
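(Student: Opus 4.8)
The plan is to read off $\inits$ directly from the shape of the target term. The crucial observation is that the initial configuration $\initconfigs = (\inputstr, 0 \csep \varepsilon, \elemblank, \varepsilon \csep \statein)$ agrees with the raw input in its first component and is completely fixed, independently of $\inputstr$, in all the others: the input counter is $\ntostr 0 = \ems$, both halves of the work tape are empty, the scanned work cell carries the blank $\elemblank$, and the control state is the fixed $\statein$. Hence, unfolding the encoding of configurations and using $\varepsilon^{\rop} = \varepsilon$,
\[
\enc\initconfigs \;=\; \tuple{\enc{\inputstr}, \enc{\ems} \csep \enc\varepsilon, \cods{\elemblank}, \enc\varepsilon \csep \cods{\statein}},
\]
in which every component other than $\enc{\inputstr}$ is a closed abstraction depending only on $\M$ and the alphabets. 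So $\inits$ has nothing to compute: it need only receive $\enc{\inputstr}$, slot it into the first position, and hand the resulting tuple to the continuation.

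Accordingly, I would set
\[
\inits \;\defeq\; \la{\cont'}\la{\istr'}\, \cont'\, \tuple{\istr', \enc{\ems} \csep \enc\varepsilon, \cods{\elemblank}, \enc\varepsilon \csep \cods{\statein}},
\]
which is essentially the only sensible choice. The verification is then a two-step computation. Since the continuation $\cont$ is a value and $\enc{\inputstr}$ is an abstraction, hence a value, the term $\inits\,\cont\,\enc{\inputstr}$ is a well-formed term of $\detLam$ whose head redex $\inits\,\cont$ sits under the evaluation context $\ctxhole\,\enc{\inputstr}$. Firing the two outermost redexes gives
\[
\inits\,\cont\,\enc{\inputstr} \;\tobdet\; (\la{\istr'}\, \cont\, \tuple{\istr', \ldots})\,\enc{\inputstr} \;\tobdet\; \cont\, \tuple{\enc{\inputstr}, \enc{\ems} \csep \enc\varepsilon, \cods{\elemblank}, \enc\varepsilon \csep \cods{\statein}},
\]
and the right-hand side is literally $\cont\,\enc\initconfigs$. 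This is exactly $2 = \Theta(1)$ steps, as required.

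I expect no genuine obstacle: this is the trivial base case of the correctness development, in sharp contrast with the successor, predecessor, and lookup lemmas, which required real inductions on the length of the binary counter. The only things to check are bookkeeping points, namely that the value restriction of $\detLam$ is respected at each application, so that both reductions are legitimate $\tobdet$-steps rather than forbidden reductions in argument position, and that the fixed components are exactly those prescribed by $\enc\initconfigs$; in particular, that the empty counter is encoded as $\enc{\ntostr 0} = \enc{\ems}$ over the binary alphabet, that the two empty work-tape halves are encoded as $\enc\varepsilon$ over $\Boolb$, and that reversing the empty left work tape leaves it unchanged.
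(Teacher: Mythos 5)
Your proposal is correct and is essentially the paper's own proof: the paper defines $\inits$ as the $\beta$-expansion of your term---a five-argument abstraction applied to the three fixed components $\enc{\ntostr 0}$, $\enc{\varepsilon}$, $\enc{\varepsilon}$---so your $\inits$ is exactly the normal form of the paper's, and the verification is the same direct computation (two steps instead of five, both $\Theta(1)$, reaching the same tuple since $\ntostr 0 = \ems$ and $\varepsilon^{\rop}=\varepsilon$). The only divergence is that the paper deliberately keeps its $\inits$ \emph{not} in normal form, citing unstated ``technical reasons''; this plays no role in the statement or proof of the lemma itself, but if the specific non-normal shape matters downstream (e.g.\ in the companion space analysis), your inlined variant would need to be rechecked there.
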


\begin{proof}
	Define
	\[
		\inits \defeq (\la {d} \la {e} \la {f} \la{\cont'} \la{\inputstr'} 
		\cont'
		\tuple{\inputstr', d \csep e ,
			\cod{\elemblank}{\Boolb}, f \csep 
			\cod{\statein}{\States}})\enc{\ntostr 
			0}\encp{\varepsilon}{\Boolb^*} \encp{\varepsilon}{\Boolb^*}
	\]
	Please note that the term is not in normal form. This is for technical 
	reasons that will be clear next.
	Then 
	\[\begin{array}{rclcl}
	\inits\, \cont\, \encp\inputstr{\Boolin^*}
	& = &
	(\la {d} \la {e} \la {f} \la{\cont'} \la{\inputstr'} 
	\cont'
	\tuple{\inputstr', d \csep e ,
		\cod{\elemblank}{\Boolb}, f \csep 
		\cod{\statein}{\States}})\enc{\ntostr 
		0}\encp{\varepsilon}{\Boolb^*} \encp{\varepsilon}{\Boolb^*} \cont 
		\encp\inputstr{\Boolin^*}\\
	& \tobdet^5 &
	\cont\, 
	\tuple{\encp{\inputstr}{\Boolin^*},\enc{\ntostr 0} \csep 
	\encp{\varepsilon}{\Boolb^*},
		\cod{\elemblank}{\Boolb}, \encp{\varepsilon}{\Boolb^*} \csep
		\cod{\statein}{\States}}\\
	& = &
	\cont\, \enc{(\inputstr,0 \csep \varepsilon, \elemblank, \ems \csep \statein)}\\
	& = &
	\cont\, \enc\initconfigs
	\end{array}\]\qedhere
\end{proof}

\paragraph{Extracting the output from the final configuration.}

\begin{lemma}[Extracting the output from the final configuration]
	\label{l:final-config}
	Let $\M=(\States,\statein,
	\statefint,\statefinf,\delta)$ be a Turing machine. There is 
	a term $\final\M{}$, or simply $\finals$, such that for every continuation term $\cont$and  for every final
	configuration 
	$\config$ of state $\state\in\Statesfin$:
	\[\finals\, \cont\, \enc\config \tobdet^{\Theta(\size\States)} 
	\begin{cases}
		\cont(\la\var{\la\vartwo\var)}&\text{ if }\state= \statefint\\
		\cont(\la\var{\la\vartwo\vartwo)} &\text{ if }\state= \statefinf
	\end{cases}\]
\end{lemma}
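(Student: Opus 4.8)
The plan is to define a term $\finals$ that, given the encoding of a final configuration, unpacks the configuration to read off its state $\state$, compares this state against the two accepting states $\statefint$ and $\statefinf$, and returns the appropriate boolean encoding. Since a configuration is encoded as $\enc\config = \la x.(x\, \enc\istr\, \enc{\ntostr\counter}\, \enc{\wstrl^\rop}\, \cods\elem\, \enc\wstrr\, \cods\state)$, the first move of $\finals$ will be to apply $\enc\config$ to a selector that discards the first five components (input, counter, left tape, head character, right tape) and retains only the sixth, namely the Scott encoding $\cods\state$ of the state. I expect the overall shape to be
\[
\finals \defeq \la{\cont'}\la{\config'}\config'(\la{\istr'}\la{\counter'}\la{\wstrl'}\la{\elem'}\la{\wstrr'}\la{\state'}\, \state'\, D_1\ldots D_{\size\States}\,\cont'),
\]
where each $D_j$ is a term associated to the $j$-th state $\state_j \in \States$.

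**Next I would** specify the selectors $D_j$ so that the application $\cods\state\, D_1\ldots D_{\size\States}$ picks out exactly $D_{i_\state}$, where $i_\state$ is the index of $\state$ in the fixed ordering of $\States$ (this is precisely the selection mechanism guaranteed by the alphabet encoding $\cod\state\States = \la\var_1\ldots\la\var_{\size\States}.\var_{i_\state}$). For the two indices corresponding to $\statefint$ and $\statefinf$ I set $D_{i_{\statefint}} \defeq \la{\cont'}\cont'(\la\var\la\vartwo\var)$ and $D_{i_{\statefinf}} \defeq \la{\cont'}\cont'(\la\var\la\vartwo\vartwo)$, returning the Church-style true and false respectively, each applied to the continuation $\cont'$; for every other index $j$ (non-final states, which by hypothesis cannot occur in a final configuration) $D_j$ may be any closed term, since it is never selected.

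**The computation then proceeds** in two phases that I would lay out as a reduction chain in a single display. First, applying $\finals$ to $\cont$ and $\enc\config$ and then $\beta$-reducing the tuple destructor consumes a constant number of steps and leaves $\cods\state\, D_1\ldots D_{\size\States}\,\cont$. Second, since $\cods\state$ is an abstraction over $\size\States$ variables selecting the $i_\state$-th, reducing it against its $\size\States$ arguments costs exactly $\Theta(\size\States)$ $\beta$-steps and yields $D_{i_\state}\,\cont$, which in one further step produces $\cont(\la\var\la\vartwo\var)$ or $\cont(\la\var\la\vartwo\vartwo)$ according to whether $\state = \statefint$ or $\state = \statefinf$. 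The total is dominated by the selection phase, giving the claimed $\Theta(\size\States)$ bound.

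**The only real subtlety**, and the step I would be most careful about, is the bookkeeping of the $\Theta(\size\States)$ count: the Scott-encoded state $\cods\state$ must be fed exactly $\size\States$ arguments $D_1,\ldots,D_{\size\States}$ (and no more, before the continuation) so that it reduces cleanly to the selected $D_{i_\state}$, and the arity of the selector must match $\size\States$ on the nose. Everything else is routine, since $\detLam$ is deterministic (Lemma~\ref{l:lamdet-is-det}) and the arguments are all values, so the reduction order is forced and the step-count is exact rather than merely an upper bound. There is no induction and no use of the fixpoint here, so this lemma is considerably simpler than the arithmetic lemmas preceding it.
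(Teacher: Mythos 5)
Your proposal is correct and matches the paper's proof essentially verbatim: the same term shape (continuation abstraction, tuple destructor discarding the first five components, Scott-encoded state applied to $\size\States$ selectors plus the continuation), the same selector definitions for $\statefint$ and $\statefinf$ with arbitrary closed terms elsewhere, and the same step accounting (constant unpacking, $\size\States$ selection steps, one final step). Nothing is missing.
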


\begin{proof}
	Define
	\begin{align*}
		\finals &\defeq \la {\cont'} \la {\config'}\config'(\lambda \inputstr'.\lambda n'.\lambda 
		\wstrl'.\lambda \elem'.\lambda 
		\wstrr'.\lambda \state'. \state' N_1\ldots N_{\size{\States}}\cont')
	\end{align*}
	where:
	\[
	N_i\defeq\begin{cases}
	\la{\cont'}{\cont'(\la\var{\la\vartwo\var)}} &\text{ if }q_i= \statefint\\
	\la{\cont'}{\cont'(\la\var{\la\vartwo\vartwo)}} &\text{ if }q_i= \statefinf\\
	\mbox{whatever closed term (say, the identity)} &\text{ otherwise }
	\end{cases}
	\]
	Then:
	\[\begin{array}{rclcl}
&&	\finals\, \cont\, \enc\config\\
	& = &
	(\la {\cont'} \la {\config'}\config'(\lambda \inputstr'.\lambda n'.\lambda 
		\wstrl'.\lambda \elem'.\lambda 
		\wstrr'.\lambda \state'. \state' N_1\ldots N_{\size{\States}}\cont')
) \cont \enc\config\\
	& \tobdet^2 &
	\enc\config(\lambda \inputstr'.\lambda n'.\lambda 
		\wstrl'.\lambda \elem'.\lambda 
		\wstrr'.\lambda \state'. \state' N_1\ldots N_{\size{\States}}\cont)\\
	& = &
	\enc{ (\inputstr,\counter \csep \wstrl, \elem, \wstrr \csep \state) }(\lambda \inputstr'.\lambda n'.\lambda 
		\wstrl'.\lambda \elem'.\lambda 
		\wstrr'.\lambda \state'. \state' N_1\ldots N_{\size{\States}}\cont)\\
	& = &
	(\lambda 
	x.x\encp{\inputstr}{\Bool^*}\enc{\ntostr\counter}\; 
	\encp{\wstrl^{\rop}}{\Boolb^*}\; 
	\cod{\elem}{\Boolb}\;\encp{\wstrr}{\Boolb^*}\;\cod{\state}{\States})
	 (\lambda \inputstr'.\lambda n'.\lambda 
		\wstrl'.\lambda \elem'.\lambda 
		\wstrr'.\lambda \state'. \state' N_1\ldots N_{\size{\States}}\cont)\\
	& \tobdet &
	(\lambda \inputstr'.\lambda n'.\lambda 
		\wstrl'.\lambda \elem'.\lambda 
		\wstrr'.\lambda \state'. \state' N_1\ldots N_{\size{\States}}\cont) 
	\encp{\inputstr}{\Bool^*}\enc{\ntostr\counter}\;\encp{\wstrl^{\rop}}{\Boolb^*} 
	\; 
	\cod{\elem}{\Boolb}\;\encp{\wstrr}{\Boolb^*}\;\cod{\state}{\States}\\
	 
	&\tobdet^6 &
	\cod{\state}{\States} N_1\ldots N_{\size{\States}}\cont
	\\&=&
	(\la{\var_1\ldots\var_{\size\States}}{\var_j})N_1\ldots N_{\size{\States}}\cont\\ 
	&\tobdet^{\size\States}&N_j\cont
	\end{array}\]
	If $\state=\statefint$, then:
	\[\begin{array}{rclcl}
	N_j\cont & = & (\la{\cont'}{\cont'(\la\var{\la\vartwo\var})})\cont\\
	&\tobdet&\cont(\la\var{\la\vartwo\var})
	\end{array}\]
	If $\state=\statefinf$, then:
	\[\begin{array}{rclcl}
	N_j\cont & = & (\la{\cont'}{\cont'(\la\var{\la\vartwo\vartwo})})\cont\\
	&\tobdet&\cont(\la\var{\la\vartwo\vartwo})
	\end{array}\]\qedhere
\end{proof}

\paragraph{Simulation of a machine transition.} Now we show that the given encoding of the 
transition function $\delta$ of a Turing machine as a $\l$-term simulates every single transition in $\bigo{\size\istr\log{\size\istr}}$ time, where $\istr$ is the input string. This is the heart of the 
encoding, and the most involved proof.

\begin{lemma}[Simulation of a machine transition]
	\label{l:trans-sim}
	Let $\M=(\States,\statein,\statefint,\statefinf,\delta)$ be a Turing 
	machine. The term 
	$\trans{\M}$ is
	such that for every continuation term $\cont$ and for every configuration 
	$\config$ of input string $\istr\in\Bool^{+}$: 
	\begin{itemize}
		\item \emph{Final configuration}:
		if $\config$ is a final configuration
		then $\transs\, \cont\, 
		\enc\config \tobdet^{\bigo{\size\istr\log{\size\istr}}} \cont\, 
		\enc\config$;
		\item \emph{Non-final configuration}: if $\config \tomachtur 
		\configtwo$ then 
		$\transs\, \cont\, 
		\enc\config 
		\tobdet^{\bigo{\size\istr\log{\size\istr}}} \transs\, \cont\, 
		\enc\configtwo$.
	\end{itemize}
\end{lemma}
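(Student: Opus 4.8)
The plan is to evaluate $\transs\,\cont\,\enc\config$ along a single reduction whose shape is dictated by the form of $\delta(\bool,\elem,\state)$, keeping a running count of $\beta$-steps. Write $\config = (\istr,\counter\csep\wstrl,\elem,\wstrr\csep\state)$. First I would isolate the \emph{common prefix} shared by every case. Since $\transs = \fix\transaux$, a constant number of steps gives $\transs\,\cont\,\enc\config \tob^2 \transaux\,(\la\varthree\transs\,\varthree)\,\cont\,\enc\config$; unfolding $\transaux$ binds $\var$ to $\la\varthree\transs\,\varthree$ and $\cont'$ to $\cont$, and feeding the destructuring abstraction to $\enc\config$ binds $\istr',\counter',\wstrl',\elem',\wstrr',\state'$ to $\enc\istr,\enc{\ntostr\counter},\enc{\wstrl^\rop},\cods\elem,\enc\wstrr,\cods\state$, leaving a call to $\lookupl$ with continuation $T$. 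The crucial observation is that all eight of these values occur free in $T$, so the entire configuration is still accessible once the lookup returns.

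Second, I would apply Lemma~\ref{l:look} to collapse the lookup to $T\,\cods\bool$, where $\bool = \lookupf\,\ntostr\counter\,\istr$ is the current input bit. This is the \emph{only} superlinear contribution: it costs $\bigo{\counter\log\counter}$ steps, and since, following Goldreich, the input head never leaves the input, we have $\counter \leq \size\istr$, so $\bigo{\counter\log\counter} \subseteq \bigo{\size\istr\log\size\istr}$. I would then run the three-level Scott selection: $\cods\bool$ selects $A_\bool$ among $A_{0}A_{1}A_{\stsym}A_{\ensym}$, then $\cods\elem$ selects $B_{\bool,\elem}$, then $\cods\state$ selects $C_{\bool,\elem,\state}$, after which $C_{\bool,\elem,\state}$ absorbs its arguments $\var,\cont',\istr',\counter',\wstrl',\wstrr'$. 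Every selection is $\bigo{1}$ except the state selection, which is $\bigo{\size\States}$, hence constant in $\size\istr$.

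Third comes the case analysis on $\delta(\bool,\elem,\state)$, which is the bulk of the argument. If $\state\in\Statesfin$ the body of $C_{\bool,\elem,\state}$ is $\cont'\,\tuple{\istr',\counter'\csep\wstrl',\cods\elem,\wstrr'\csep\cods\state}$, and since the bound values reassemble exactly $\enc\config$ this yields $\cont\,\enc\config$, which settles the final case. Otherwise $C_{\bool,\elem,\state}$ applies to $\counter'$ either nothing, or $\predl$, or $\succl$---costing $\bigo{\log\counter}$ by Lemmas~\ref{l:succ} and~\ref{l:pred}---and passes the resulting counter to $S$, $L$, or $R$ according to the work-head movement. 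For $S$ the next configuration is built directly, and $\var = \la\varthree\transs\,\varthree$ turns the leading $\var\,\cont'\,\enc\configtwo$ into $\transs\,\cont\,\enc\configtwo$. For $L$ (and symmetrically $R$) one first uses the reversed encoding $\enc{\wstrl^\rop}$ to select the boundary character $\elem''$ of the left tape---or $L_{\ems}$ when $\wstrl=\ems$, in which case a blank $\elemblank$ is created---and then uses $\appendchar{\elem'}$ (Lemma~\ref{l:append-char}, cost $\bigo{1}$) to prepend the written character $\elem'$ to $\wstrr$; it then remains to check that the reassembled tuple is $\enc\configtwo$ by matching it against the definition of $\tomachtur$, e.g.\ $(\istr,\counter\csep\wstr\elem'',\elem,\wstrr\csep\state)\tomachtur(\istr,\counter-1\csep\wstr,\elem'',\elem'\wstrr\csep\state')$ for a left move that decrements the input head.

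I expect the main obstacle to be the bookkeeping rather than any single difficult idea. The subterms $T$, $A_\bool$, $B_{\bool,\elem}$, $C_{\bool,\elem,\state}$, $S$, $L$, $R$ and the $L_{\elem''}$/$R_{\elem''}$ families share many free variables that are bound only during the common prefix, so one must track precisely which encoding each of $\var,\cont',\istr',\counter',\wstrl',\wstrr'$ denotes at each stage, and must not confuse the destructuring binders $\elem',\state'$ with the meta-level next-character and next-state supplied by $\delta$. The most delicate points are the $L$ and $R$ branches with an empty adjacent tape, where the intrinsic representation forces the explicit creation of a blank cell, and the verification that $\wstrl^\rop$ really exposes the correct boundary character. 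For the cost, it then suffices to observe that the total---$\bigo{\size\istr\log\size\istr}$ for the lookup, $\bigo{\log\counter}$ for the successor/predecessor, and $\bigo{1}$ or $\bigo{\size\States}$ for every remaining sub-step---stays within $\bigo{\size\istr\log\size\istr}$ uniformly over all cases.
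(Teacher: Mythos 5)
Your proposal is correct and follows essentially the same route as the paper's proof: the common prefix ending in the $\lookupl$ call with continuation $T$, the $\bigo{\counter\log\counter}\subseteq\bigo{\size\istr\log\size\istr}$ bound via $\counter\leq\size\istr$, the three-level Scott selection down to $C_{\bool,\elem,\state}$, and the case analysis on $\delta$ with $\predl$/$\succl$ and the $S$/$L$/$R$ branches, including the empty-tape sub-cases where a blank cell is created. The paper simply carries out in full the step-by-step reductions you sketch, with the same cost accounting.
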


\begin{proof}

Let $C=(\istr,\counter \csep \wstrl,\elem,\wstrr \csep \state)$. We are now 
going to show the details of how the $\l$-calculus simulates the transition 
function. At the level of the number of steps, the main cost is payed at the 
beginning, by the $\lookupl$ function that looks up the $\counter$-th character 
of the input string $\istr$. The cost of one such call is 
$\bigo{\counter\log\counter}$, but since $n$ can vary and $n\leq \size\istr$, 
such a cost is bound by $\bigo{\size\istr\log{\size\istr}}$. The cases of 
transition where the position on the input tape does not change have a constant 
cost. Those where the input position changes require to change the counter 
$\counter$ via $\predl$ or $\succl$, which requires $\bigo{\log \counter}$, 
itself bound by the cost $\bigo{\size\istr\log{\size\istr}}$ of the previous 
look-up.

Now, if $\lookupf\, \ntostr n \istr = \bool$ then:
\begin{center}
$\begin{array}{llllll}
\transs\,  \cont\,  \enc\config \\
=\\ 
\fix \transaux \cont  \enc\config 
\\  \tobdet^2 \\
\transaux (  \la\varthree\fix \transaux\varthree ) \cont  \enc\config
\\ =\\
\transaux ( \la\varthree\transs\,\varthree  ) \cont  \enc\config
\\ =\\
(\lambda \var.\la{\cont'}\lambda \config'.\config'(\la{\istr'}\la{\counter'}\la{\wstrl'}\la{\elem'}\la{\wstrr'}\la{\state'}\lookupl\,T \istr' \counter')) ( \la\varthree\transs\,\varthree  ) \cont  \enc\config
\\  \tobdet^3 \\
\enc\config(\la{\istr'}\la{\counter'}\la{\wstrl'}\la{\elem'}\la{\wstrr'}\la{\state'}\lookupl\,T\isub\var{\la\varthree\transs\,\varthree} \istr' \counter')
\\ =\\
\enc{(\istr,\counter\csep \wstrl,\elem,\wstrr \csep \state)} (\la{\istr'}\la{\counter'}\la{\wstrl'}\la{\elem'}\la{\wstrr'}\la{\state'}\lookupl\,T\isub\var{\la\varthree\transs\,\varthree} \istr' \counter')
\\ =\\
(\lambda x. x\enc{\istr}\enc{\ntostr\counter}
\enc{\wstrl^\rop} \;\cods{\elem} \;\enc\wstrr \;\cods{\state} )
 (\la{\istr'}\la{\counter'}\la{\wstrl'}\la{\elem'}\la{\wstrr'}\la{\state'}\lookupl\,T\isub\var{\la\varthree\transs\,\varthree} \istr' \counter')

\\ =\\
(\la{\istr'}\la{\counter'}\la{\wstrl'}\la{\elem'}\la{\wstrr'}\la{\state'}\lookupl\,T\isub\var{\la\varthree\transs\,\varthree} \istr' \counter')
 \enc{\istr}\enc{\ntostr\counter}
\enc{\wstrl^\rop} \;\cods{\elem} \;\enc\wstrr \;\cods{\state}

\\  \tobdet^{6} \\
\lookupl\,(\la {\bool'} \bool'\, 
A_{0}A_{1}A_{\stsym}A_{\ensym}\cods\elem\, \cods\state 
(\la\varthree\transs\,\varthree) \cont' \enc\istr \enc{\ntostr\counter} 
\enc{\wstrl^{\rop}} \enc{\wstrr}) \enc\istr \enc{\ntostr\counter}

 \\ \reflemmaeq{look}  \tobdet^{\bigo{\counter\log\counter}} \\
 (\la {\bool'} \bool'\, 
A_{0}A_{1}A_{\stsym}A_{\ensym}\cods\elem\, \cods\state 
(\la\varthree\transs\,\varthree) \cont' \enc\istr \enc{\ntostr\counter} 
\enc{\wstrl^{\rop}} \enc{\wstrr}
)\cods\bool 

 \\  \tobdet \\
\cods\bool\, 
A_{0}A_{1}A_{\stsym}A_{\ensym} \cods{\elem}\, \cods{\state} ( 
\la\varthree\transs\,\varthree) \cont \enc{\istr}\enc{\ntostr\counter}
\enc{\wstrl^\rop}  \enc\wstrr

 \\  \tobdet^{4} \\
A_{\bool} \cods{\elem}\,\cods{\state} ( \la\varthree\transs\,\varthree) \cont \enc{\istr}\enc{\ntostr\counter}
\enc{\wstrl^\rop} \;\enc\wstrr 

 \\ =\\
(\la{\elem'} \elem' B_{\bool,0} B_{\bool,1} B_{\bool,\elemblank} ) \cods{\elem}\,\cods{\state} ( \la\varthree\transs\,\varthree) \cont \enc{\istr}\enc{\ntostr\counter}
\enc{\wstrl^\rop} \;\enc\wstrr 

 \\  \tobdet \\
 \cods\elem B_{\bool,0} B_{\bool,1} B_{\bool,\elemblank} \cods\state ( \la\varthree\transs\,\varthree) \cont \enc\istr \enc{\ntostr\counter} \enc{\wstrl^{\rop}} \enc\wstrr

  \\  \tobdet^{3} \\
  B_{\bool,\elem} \cods\state ( \la\varthree\transs\,\varthree) \cont \enc\istr \enc{\ntostr\counter} \enc{\wstrl^{\rop}} \enc\wstrr

    \\ =\\
  (\la{\state'}\state' C_{\bool,\elem,\state_{1}}\ldots C_{\bool,\elem,\state_{\size\States}}) \cods\state ( \la\varthree\transs\,\varthree) \cont \enc\istr \enc{\ntostr\counter} \enc{\wstrl^{\rop}} \enc\wstrr

    \\  \tobdet \\
  \cods\state C_{\bool,\elem,\state_{1}}\ldots C_{\bool,\elem,\state_{\size\States}}  ( \la\varthree\transs\,\varthree) \cont \enc\istr \enc{\ntostr\counter} \enc{\wstrl^{\rop}} \enc\wstrr

    \\  \tobdet^{\size\States} \\
    C_{\bool,\elem,\state} ( \la\varthree\transs\,\varthree) \cont \enc\istr \enc{\ntostr\counter} \enc{\wstrl^{\rop}} \enc\wstrr

\end{array}$
 \end{center}
  
 Now, consider the following 
 four cases, depending on the value of $\delta(\bool,\elem,\state)$:
 \begin{enumerate}
    \item \emph{Final state}: 
    if $\delta(\bool,\elem,\state)$ is undefined, then $\state \in\Statesfin$ and replacing $C_{\bool,\elem,\state}$ with the corresponding $\l$-term we obtain:
\begin{center}
$\begin{array}{llllll}
C_{\bool,\elem,\state} ( \la\varthree\transs\,\varthree) \cont \enc\istr \enc{\ntostr\counter} \enc{\wstrl^{\rop}} \enc\wstrr
\\	
=\\
    (\la\var \la{\cont'} \la{\istr'} \la{\counter'} \la{\wstrl'} \la{\wstrr'} \cont' \tuple{\istr',\counter' \csep \wstrl',\cods{\elem}, \wstrr' \csep \cods{\state} }) ( \la\varthree\transs\,\varthree) \cont \enc\istr \enc{\ntostr\counter} \enc{\wstrl^{\rop}} \enc\wstrr
    \\
     \tobdet^6 \\
   \cont  \tuple{\enc\istr,\enc{\ntostr\counter} \csep \enc{\wstrl^{\rop}} ,\cods{\elem}, \enc\wstrr \csep \cods{\state} }\\
    =\\
    \cont  \enc{(\istr,\counter \csep \wstrl ,\elem,\wstrr \csep \state)}\\
    =\\
    \cont  \enc\config\\
\end{array}$
\end{center}
 
    \item \emph{The heads do not move}: 
    if 
    $\delta(\bool,\elem,\state)=(0 \csep \elem',\downarrow \csep \state')$,
     then $\configtwo = (\istr,\counter \csep \wstrl, \elem',\wstrr,\state')$. The simulation continues as follows:
\begin{center}
$\begin{array}{llllll}
C_{\bool,\elem,\state} ( \la\varthree\transs\,\varthree) \cont \enc\istr \enc{\ntostr\counter} \enc{\wstrl^{\rop}} \enc\wstrr
\\	
     = \\
        (\la\var \la{\cont'} \la{\istr'} \la{\counter'} \la{\wstrl'} \la{\wstrr'} 
S \counter')
   ( \la\varthree\transs\,\varthree) \cont \enc\istr \enc{\ntostr\counter} \enc{\wstrl^{\rop}} \enc\wstrr
\\     = \\
        (\la\var \la{\cont'} \la{\istr'} \la{\counter'} \la{\wstrl'} \la{\wstrr'} 
(\la{\counter''}\var\cont' \tuple{\istr',\counter''\csep 
   \wstrl',\cods{\elem'}, \wstrr' \csep \cods{\state'}}) \counter')
   ( \la\varthree\transs\,\varthree) \cont \enc\istr \enc{\ntostr\counter} \enc{\wstrl^{\rop}} \enc\wstrr
\\    \tobdet^{6} \\
  (\la{\counter''}( \la\varthree\transs\,\varthree) \cont  \tuple{\enc\istr,\counter'' \csep \enc{\wstrl^{\rop}} ,\cods{\elem'}, \enc\wstrr \csep \cods{\state'} })\enc{\ntostr\counter}
\\    \tobdet^{2} \\
  \transs\, \cont  \tuple{\enc\istr,\enc{\ntostr\counter} \csep \enc{\wstrl^{\rop}} ,\cods{\elem'}, \enc\wstrr \csep \cods{\state'} }
\\    = \\
  \transs\, \cont  \enc{(\istr,\counter \csep \wstrl , \elem', \wstrr \csep \state')}
\\    = \\
  \transs\, \cont  \enc{\configtwo}
\end{array}$
\end{center}

    \item \emph{The input head does not move and the work head moves left}: if 
      $\delta(\bool,\elem,\state)=(0\csep \elem',\leftarrow \csep \state')$ and $\wstrl = \wstr \elem''$ then:
\begin{center}
$\begin{array}{llllll}
C_{\bool,\elem,\state} ( \la\varthree\transs\,\varthree) \cont \enc\istr \enc{\ntostr\counter} \enc{\wstrl^{\rop}} \enc\wstrr
\\	
=
\\
(\la\var \la{\cont'} \la{\istr'} \la{\counter''} \la{\wstrl'} \la{\wstrr'} 
L \counter') ( \la\varthree\transs\,\varthree) \cont \enc\istr \enc{\ntostr\counter} \enc{\wstrl^{\rop}} \enc\wstrr
\\
    = 
    \\
    (\la\var \la{\cont'} \la{\istr'} \la{\counter'} \la{\wstrl'} \la{\wstrr'} 
 (\la{\counter''}\wstrl' L_0^{\state',\elem'} L_1^{\state',\elem'} L_{\elemblank}^{\state',\elem'}L_{\ems}^{\state',\elem'} \var \cont' \istr' \counter'' \wstrr') \counter')
   ( \la\varthree\transs\,\varthree) \cont \enc\istr \enc{\ntostr\counter} \enc{\wstrl^{\rop}} \enc\wstrr
\\    \tobdet^{6} \\
(\la{\counter''}\enc{\wstrl^{\rop}}  L_0^{\state',\elem'} L_1^{\state',\elem'} L_{\elemblank}^{\state',\elem'}L_{\ems}^{\state',\elem'}  ( \la\varthree\transs\,\varthree) \cont \enc\istr \counter'' \enc\wstrr)\enc{\ntostr\counter}
\\    \tobdet \\
\enc{\wstrl^{\rop}}  L_0^{\state',\elem'} L_1^{\state',\elem'} L_{\elemblank}^{\state',\elem'}L_{\ems}^{\state',\elem'}  ( \la\varthree\transs\,\varthree) \cont \enc\istr \enc{\ntostr\counter} \enc\wstrr
\end{array}$
\end{center}
      
    Two sub-cases, depending on whether $\wstrl$ is an empty or a compound string.
    
    \begin{enumerate}
	 \item \emph{$\wstrl$ is the compound string $\wstr\elem''$}. Then $\wstrl^{\rop} = \elem'' \wstr^{\rop}$ and $\configtwo = (\istr,\counter \csep \wstrl, \elem'',\elem'\wstrr \csep \state')$. The simulation continues as follows:
    \begin{center}
$\begin{array}{llllll}
    = \\
\enc{\elem''\wstr^{\rop}}  L_0^{\state',\elem'} L_1^{\state',\elem'} L_{\elemblank}^{\state',\elem'}L_{\ems}^{\state',\elem'}  ( \la\varthree\transs\,\varthree) \cont \enc\istr \enc{\ntostr\counter} \enc\wstrr

\\    \tobdet^{4} \\
  L_{\elem''}^{\state',\elem'} \enc{\wstr^{\rop}}  ( \la\varthree\transs\,\varthree) \cont \enc\istr \enc{\ntostr\counter} \enc\wstrr
\\    = \\  
        (\la{\wstrl'} \la\var \la{\cont'} \la{\istr'} \la{\counter'}\appendchar{\elem'}(\lambda \wstrr'.\var\cont' 
\tuple{\istr',\counter' \csep \wstrl',\cods{\elem''},\wstrr' \csep \cods{\state'}}))
\enc{\wstr^{\rop}}  ( \la\varthree\transs\,\varthree) \cont \enc\istr \enc{\ntostr\counter} \enc\wstrr
\\    \tobdet^{5} \\
        \appendchar{\elem'}(\lambda \wstrr'. ( \la\varthree\transs\,\varthree) \cont 
\tuple{\enc\istr,\enc{\ntostr\counter} \csep \enc{\wstr^{\rop}},\cods{\elem''},\wstrr' \csep \cods{\state'}})
 \enc\wstrr
 
\\ \reflemmaeq{append-char}     \tobdet^{ O(1) } \\
(\lambda \wstrr'. ( \la\varthree\transs\,\varthree) \cont 
\tuple{\enc\istr,\enc{\ntostr\counter} \csep \enc{\wstr^{\rop}},\cods{\elem''},\wstrr' \csep \cods{\state'}})
 \enc{\elem'\wstrr}
\\     \tobdet^{2} \\
\transs\, \cont 
\tuple{\enc\istr,\enc{\ntostr\counter} \csep \enc{\wstr^{\rop}},\cods{\elem''}, \enc{\elem'\wstrr} \csep \cods{\state'}}
\\     = \\
\transs\, \cont 
\enc{(\istr,\counter \csep \wstr,\elem'', \elem'\wstrr \csep \state')}
\\     = \\
\transs\, \cont \enc{\configtwo}
\end{array}$
\end{center}
    
    	 \item \emph{$\wstrl$ is the empty string $\varepsilon$}. Then $\configtwo = (\istr,\counter \csep \ems, \elemblank,\elem'\wstrr \csep \state')$. The simulation continues as follows:
    \begin{center}
$\begin{array}{llllll}
    = \\
\enc{\ems}  L_0^{\state',\elem'} L_1^{\state',\elem'} L_{\elemblank}^{\state',\elem'}L_{\ems}^{\state',\elem'}   ( \la\varthree\transs\,\varthree) \cont \enc\istr \enc{\ntostr\counter} \enc\wstrr

\\    \tobdet^{4} \\
  L_{\ems}^{\state',\elem'}  ( \la\varthree\transs\,\varthree) \cont \enc\istr \enc{\ntostr\counter} \enc\wstrr
\\    = \\
        ( \la\var \la{\cont'} \la{\istr'} \la{\counter'}
       \appendchar{\elem'}((\la d \lambda \wstrr.\var\cont' 
      \tuple{\istr',\counter'\csep d,\cods{\elemblank},\wstrr \csep  
      \cods{\state'}})\enc\ems))
 ( \la\varthree\transs\,\varthree) \cont \enc\istr \enc{\ntostr\counter} \enc\wstrr
\\    \tobdet^{5} \\
        \appendchar{\elem'}((\la d \lambda \wstrr'. ( 
        \la\varthree\transs\,\varthree) \cont 
\tuple{\enc\istr,\enc{\ntostr\counter} \csep d,\cods{\elemblank},\wstrr' \csep 
\cods{\state'}})\enc\ems)
 \enc\wstrr
 
\\ \reflemmaeq{append-char}     \tobdet^{ O(1) } \\
((\la d \lambda \wstrr'. ( \la\varthree\transs\,\varthree) \cont 
\tuple{\enc\istr,\enc{\ntostr\counter} \csep d,\cods{\elemblank},\wstrr' \csep 
\cods{\state'}})\enc\ems)
 \enc{\elem'\wstrr}
\\     \tobdet^{3} \\
\transs\, \cont 
\tuple{\enc\istr,\enc{\ntostr\counter} \csep \enc\ems,\cods{\elemblank}, 
\enc{\elem'\wstrr} \csep \cods{\state'}}
\\     = \\
\transs\, \cont 
\enc{(\istr,\counter \csep \ems,\elemblank, \elem'\wstrr \csep \state')}
\\     = \\
\transs\, \cont \enc{\configtwo}
\end{array}$
\end{center}
\end{enumerate}
  
    \item \emph{The input head does not move and the work head moves right}:  if 
      $\delta(\bool,\elem,\state)=(0\csep \elem',\rightarrow \csep \state')$ and $\wstrr = \elem'' \wstr $ then:
\begin{center}
$\begin{array}{llllll}
C_{\bool,\elem,\state} ( \la\varthree\transs\,\varthree) \cont \enc\istr \enc{\ntostr\counter} \enc{\wstrl^{\rop}} \enc\wstrr
\\	
     = \\
    (\la\var \la{\cont'} \la{\istr'} \la{\counter'} \la{\wstrl'} \la{\wstrr'} 
 R \counter') 
   ( \la\varthree\transs\,\varthree) \cont \enc\istr \enc{\ntostr\counter} \enc{\wstrl^{\rop}} \enc\wstrr
\\    =\\
    (\la\var \la{\cont'} \la{\istr'} \la{\counter'} \la{\wstrl'} \la{\wstrr'} 
 (\la{\counter''}\wstrr' R_0^{\state',\elem'} R_1^{\state',\elem'} R_{\elemblank}^{\state',\elem'}R_{\ems}^{\state',\elem'} \var \cont' \istr' \counter'' \wstrl') \counter')
   ( \la\varthree\transs\,\varthree) \cont \enc\istr \enc{\ntostr\counter} \enc{\wstrl^{\rop}} \enc\wstrr
\\   \tobdet^{6} \\ 
(\la{\counter''}\enc{\wstrr}  R_0^{\state',\elem'} R_1^{\state',\elem'} R_{\elemblank}^{\state',\elem'}R_{\ems}^{\state',\elem'}  ( \la\varthree\transs\,\varthree) \cont \enc\istr \counter'' \enc{\wstrl^{\rop}})\enc{\ntostr\counter}
\\       \tobdet \\
\enc{\wstrr}  R_0^{\state',\elem'} R_1^{\state',\elem'} R_{\elemblank}^{\state',\elem'}R_{\ems}^{\state',\elem'}  ( \la\varthree\transs\,\varthree) \cont \enc\istr \enc{\ntostr\counter} \enc{\wstrl^{\rop}}
\end{array}$
\end{center}
      
    Two sub-cases, depending on whether $\wstrl$ is an empty or a compound string.
    
    \begin{enumerate}
	 \item \emph{$\wstrr$ is the compound string $\elem''\wstr$}. Then  $\configtwo = (\istr,\counter \csep \wstrl\elem', \elem'',\wstr \csep \state')$. The simulation continues as follows:
    \begin{center}
$\begin{array}{llllll}
   =\\
\enc{\elem''\wstr}  R_0^{\state',\elem'} R_1^{\state',\elem'} R_{\elemblank}^{\state',\elem'}R_{\ems}^{\state',\elem'}  ( \la\varthree\transs\,\varthree) \cont \enc\istr \enc{\ntostr\counter} \enc{\wstrl^{\rop}}

\\    \tobdet^{4} \\
  R_{\elem''}^{\state',\elem'} \enc{\wstr}  ( \la\varthree\transs\,\varthree) \cont \enc\istr \enc{\ntostr\counter} \enc{\wstrl^{\rop}}
\\   =\\  
        (\la{\wstrr'} \la\var \la{\cont'} \la{\istr'} \la{\counter'}\appendchar{\elem'}(\lambda \wstrl'.\var\cont' 
\tuple{\istr',\counter' \csep \wstrl',\cods{\elem''},\wstrr' \csep \cods{\state'}}))
\enc{\wstr}  ( \la\varthree\transs\,\varthree) \cont \enc\istr \enc{\ntostr\counter} \enc{\wstrl^{\rop}}
\\    \tobdet^{5} \\  
        \appendchar{\elem'}(\lambda \wstrl'. ( \la\varthree\transs\,\varthree) \cont 
\tuple{\enc\istr,\enc{\ntostr\counter} \csep \wstrl',\cods{\elem''},\wstr \csep \cods{\state'}})
 \enc{\wstrl^{\rop}}
 
\\ \reflemmaeq{append-char}     \tobdet^{ O(1) } 
(\lambda \wstrl'. ( \la\varthree\transs\,\varthree) \cont 
\tuple{\enc\istr,\enc{\ntostr\counter} \csep \wstrl',\cods{\elem''},\wstr \csep \cods{\state'}})
 \enc{\elem'\wstrl^{\rop}}
\\     \tobdet^{2} \\
\transs\, \cont 
\tuple{\enc\istr,\enc{\ntostr\counter} \csep  \enc{\elem'\wstrl^{\rop}},\cods{\elem''},\wstr \csep \cods{\state'}}
\\    =\\ 
\transs\, \cont 
\tuple{\enc\istr,\enc{\ntostr\counter} \csep  \enc{(\wstrl\elem')^{\rop}},\cods{\elem''},\wstr \csep \cods{\state'}}
\\    =\\ 
\transs\, \cont 
\enc{(\istr,\counter \csep \wstrl\elem',\elem'', \wstr \csep \state')}
\\    =\\ 
\transs\, \cont \enc{\configtwo}
\end{array}$
\end{center}
    
    	 \item \emph{$\wstrr$ is the empty string $\varepsilon$}. Then $\configtwo = (\istr,\counter \csep \wstrl\elem', \elemblank,\ems \csep \state')$. The simulation continues as follows:
    \begin{center}
$\begin{array}{llllll}
   =\\
\enc{\ems}  R_0^{\state',\elem'} R_1^{\state',\elem'} R_{\elemblank}^{\state',\elem'}R_{\ems}^{\state',\elem'}  ( \la\varthree\transs\,\varthree) \cont \enc\istr \enc{\ntostr\counter} \enc{\wstrl^{\rop}}
\\    \tobdet^{4} \\
  R_{\ems}^{\state',\elem'}  ( \la\varthree\transs\,\varthree) \cont \enc\istr \enc{\ntostr\counter} \enc{\wstrl^{\rop}}
\\   =\\  
        (\la\var \la{\cont'} \la{\istr'} \la{\counter'}
       \appendchar{\elem'}((\la d \lambda \wstrl'.\var\cont' 
      \tuple{\istr',\counter'\csep \wstrl',\cods{\elemblank},d \csep  
      \cods{\state'}})\enc\ems)
)
 ( \la\varthree\transs\,\varthree) \cont \enc\istr \enc{\ntostr\counter} \enc{\wstrl^{\rop}}
\\    \tobdet^{5} \\
        \appendchar{\elem'}((\la d \lambda \wstrl'. ( 
        \la\varthree\transs\,\varthree) \cont 
\tuple{\enc\istr,\enc{\ntostr\counter} \csep \wstrl',\cods{\elemblank},d \csep 
\cods{\state'}})\enc\ems)
\enc{\wstrl^{\rop}}
 
\\ \reflemmaeq{append-char}     \tobdet^{ O(1) } \\
((\la d \lambda \wstrl'. ( \la\varthree\transs\,\varthree) \cont 
\tuple{\enc\istr,\enc{\ntostr\counter} \csep \wstrl',\cods{\elemblank},d \csep 
\cods{\state'}})\enc\ems)
 \enc{\elem'\wstrl^{\rop}}
\\    \tobdet^{3} \\ 
\transs\, \cont 
\tuple{\enc\istr,\enc{\ntostr\counter} \csep \enc{\elem'\wstrl^{\rop}},\cods{\elemblank}, \enc{\ems} \csep \cods{\state'}})
\\    =\\ 
\transs\, \cont 
\tuple{\enc\istr,\enc{\ntostr\counter} \csep \enc{(\wstrl\elem')^{\rop}},\cods{\elemblank}, \enc{\ems} \csep \cods{\state'}})
\\    =\\ 
\transs\, \cont 
\enc{(\istr,\counter \csep \wstrl\elem',\elemblank, \ems \csep \state')}
\\    =\\ 
\transs\, \cont \enc{\configtwo}
\end{array}$
\end{center}
\end{enumerate}

    \item \emph{The input head moves left and the work head does not move}: 
    if 
    $\delta(\bool,\elem,\state)=(-1 \csep \elem',\downarrow \csep \state')$,
     then $\configtwo = (\istr,\counter-1 \csep \wstrl, \elem',\wstrr,\state')$. The simulation continues as follows:
\begin{center}
$\begin{array}{llllll}
C_{\bool,\elem,\state} ( \la\varthree\transs\,\varthree) \cont \enc\istr \enc{\ntostr\counter} \enc{\wstrl^{\rop}} \enc\wstrr
\\	
    =\\
    (\la\var \la{\cont'} \la{\istr'} \la{\counter'} \la{\wstrl'} \la{\wstrr'} 
\predl\, S 
   ( \la\varthree\transs\,\varthree) \cont \enc\istr \enc{\ntostr\counter} \enc{\wstrl^{\rop}} \enc\wstrr
   \\ =\\
    (\la\var \la{\cont'} \la{\istr'} \la{\counter'} \la{\wstrl'} \la{\wstrr'} 
\predl\, (\la {\counter''} x\cont' \tuple{\istr',\counter''\csep 
  	\wstrl',\cods{\elem'}, \wstrr' \csep \cods{\state'}} )\counter') 
   ( \la\varthree\transs\,\varthree) \cont \enc\istr \enc{\ntostr\counter} \enc{\wstrl^{\rop}} \enc\wstrr
\\    \tobdet^{6} \\
  \predl\, (\la {\counter''} ( \la\varthree\transs\,\varthree)\cont \tuple{\enc\istr,\counter''\csep 
  	\enc{\wstrl^{\rop}},\cods{\elem'}, \enc\wstrr \csep \cods{\state'}} )\enc{\ntostr\counter}
\\  \reflemmaeq{pred}  \tobdet^{\bigo{\log n}} \\
(\la {\counter''} ( \la\varthree\transs\,\varthree)\cont \tuple{\enc\istr,\counter''\csep 
  	\enc{\wstrl^{\rop}},\cods{\elem'}, \enc\wstrr \csep \cods{\state'}} )\enc{\ntostr{\counter-1}}
	\\   =\\
  \transs\, \cont  \enc{(\istr,\ntostr{\counter-1} \csep \wstrl , \elem', \wstrr \csep \state')}
  \\   =\\
  \transs\, \cont  \enc{(\istr,\ntostr{\counter-1} \csep \wstrl , \elem', \wstrr \csep \state')}
\\   =\\
  \transs\, \cont  \enc{\configtwo}
\end{array}$
\end{center}

    \item \emph{The input head moves left and the work head moves left}: 
    if 
    $\delta(\bool,\elem,\state)=(-1 \csep \elem',\leftarrow \csep \state')$ and $\wstrl = \wstr\elem''$,
     then $\configtwo = (\istr,\counter-1 \csep \wstr, \elem'', \elem'\wstrr,\state')$. The simulation continues as follows:
\begin{center}
$\begin{array}{llllll}
C_{\bool,\elem,\state} ( \la\varthree\transs\,\varthree) \cont \enc\istr \enc{\ntostr\counter} \enc{\wstrl^{\rop}} \enc\wstrr
\\	
    =\\
    (\la\var \la{\cont'} \la{\istr'} \la{\counter'} \la{\wstrl'} \la{\wstrr'} 
\predl\, L \counter') 
   ( \la\varthree\transs\,\varthree) \cont \enc\istr \enc{\ntostr\counter} \enc{\wstrl^{\rop}} \enc\wstrr
\\	
    =\\
    (\la\var \la{\cont'} \la{\istr'} \la{\counter'} \la{\wstrl'} \la{\wstrr'} 
\predl\, (\la{\counter''}\wstrl' L_0^{\state',\elem'} L_1^{\state',\elem'} L_{\elemblank}^{\state',\elem'}L_{\ems}^{\state',\elem'} \var \cont' \istr' \counter'' \wstrr')\counter') 
   ( \la\varthree\transs\,\varthree) \cont \enc\istr \enc{\ntostr\counter} \enc{\wstrl^{\rop}} \enc\wstrr
\\    \tobdet^{6} \\
  \predl\, (\la{\counter''}\enc{\wstrl^{\rop}} L_0^{\state',\elem'} L_1^{\state',\elem'} L_{\elemblank}^{\state',\elem'}L_{\ems}^{\state',\elem'} ( \la\varthree\transs\,\varthree) \cont \enc\istr \counter'' \enc\wstrr)\enc{\ntostr\counter}
  
\\  \reflemmaeq{pred}  \tobdet^{\bigo{\log n}} \\
(\la{\counter''}\enc{\wstrl^{\rop}} L_0^{\state',\elem'} L_1^{\state',\elem'} L_{\elemblank}^{\state',\elem'}L_{\ems}^{\state',\elem'} ( \la\varthree\transs\,\varthree) \cont \enc\istr \counter'' \enc\wstrr)
\enc{\ntostr{\counter-1}}
	\\    \tobdet \\
  \enc{\wstrl^{\rop}} L_0^{\state',\elem'} L_1^{\state',\elem'} L_{\elemblank}^{\state',\elem'}L_{\ems}^{\state',\elem'} ( \la\varthree\transs\,\varthree) \cont \enc\istr \enc{\ntostr{\counter-1}} \enc\wstrr
\end{array}$
\end{center}
And then the case continues with the two sub-cases of case 3 (input head does not move and work head moves left), with the only difference that $\enc{\ntostr{\counter}}$ is replaced by $\enc{\ntostr{\counter-1}}$.

    \item \emph{The input head moves left and the work head moves right}: 
    if 
    $\delta(\bool,\elem,\state)=(-1 \csep \elem',\leftarrow \csep \state')$ and $\wstrl = \wstr\elem''$,
     then $\configtwo = (\istr,\counter-1 \csep \wstr, \elem'', \elem'\wstrr,\state')$. The simulation continues as follows:
\begin{center}
$\begin{array}{llllll}
C_{\bool,\elem,\state} ( \la\varthree\transs\,\varthree) \cont \enc\istr \enc{\ntostr\counter} \enc{\wstrl^{\rop}} \enc\wstrr
\\	
     = 
    \\
    (\la\var \la{\cont'} \la{\istr'} \la{\counter'} \la{\wstrl'} \la{\wstrr'} 
\predl\, R \counter') 
   ( \la\varthree\transs\,\varthree) \cont \enc\istr \enc{\ntostr\counter} \enc{\wstrl^{\rop}} \enc\wstrr
\\	
     = 
     \\
    (\la\var \la{\cont'} \la{\istr'} \la{\counter'} \la{\wstrl'} \la{\wstrr'} 
\predl (\la{\counter''}\wstrr' R_0^{\state',\elem'} R_1^{\state',\elem'} R_{\elemblank}^{\state',\elem'}R_{\ems}^{\state',\elem'} \var \cont' \istr' \counter'' \wstrl'
)\counter') 
   ( \la\varthree\transs\,\varthree) \cont \enc\istr \enc{\ntostr\counter} \enc{\wstrl^{\rop}} \enc\wstrr
\\
    \tobdet^{6} 
\\
  \predl (\la{\counter''}\enc{\wstrr} R_0^{\state',\elem'} R_1^{\state',\elem'} R_{\elemblank}^{\state',\elem'}R_{\ems}^{\state',\elem'} ( \la\varthree\transs\,\varthree) \cont \enc\istr \counter'' \enc{\wstrl^{\rop}})\enc{\ntostr\counter}
  
\\
  \text{by }\reflemmaeq{pred} \ \ \  \tobdet^{\bigo{\log n}}
  \\
(\la{\counter''}\enc{\wstrr} R_0^{\state',\elem'} R_1^{\state',\elem'} R_{\elemblank}^{\state',\elem'}R_{\ems}^{\state',\elem'} ( \la\varthree\transs\,\varthree) \cont \enc\istr \counter'' \enc{\wstrl^{\rop}})
\enc{\ntostr{\counter-1}}
	\\   
	 \tobdet 
	 \\
  \la{\counter''}\enc{\wstrr} R_0^{\state',\elem'} R_1^{\state',\elem'} R_{\elemblank}^{\state',\elem'}R_{\ems}^{\state',\elem'} ( \la\varthree\transs\,\varthree) \cont \enc\istr \enc{\ntostr{\counter-1}} \enc{\wstrl^{\rop}}
\end{array}$
\end{center}
And then the case continues with the two sub-cases of case 4 (input head does not move and work head moves right), with the only difference that $\enc{\ntostr{\counter}}$ is replaced by $\enc{\ntostr{\counter-1}}$.

\item \emph{The input head moves right and the work head does not move}: exactly as case 5 (input head \emph{left}, work head does not move) just replacing $\predl$ with $\succl$ and using \reflemma{succ} instead of \reflemma{pred}.

\item \emph{The input head moves right and the work head moves left}: exactly as case 6 (input head \emph{left}, work head left) just replacing $\predl$ with $\succl$ and using \reflemma{succ} instead of \reflemma{pred}.

\item \emph{The input head moves right and the work head moves right}: exactly as case 7 (input head \emph{right}, work head right) just replacing $\predl$ with $\succl$ and using \reflemma{succ} instead of \reflemma{pred}.\qedhere
\end{enumerate}
 
\end{proof}

Straightforward inductions on the length of executions provide the following 
corollaries.

\begin{corollary}[Executions]
	\label{coro:exec}
	Let $\M$ be a Turing machine. Then there exist a term $\transs$ encoding 
	$\M$ 
	as given by \reflemma{trans-sim} such that
	for every configuration $\config$ of input string $\istr\in\Bool^{+}$
	\begin{enumerate}
		\item \emph{Finite computation}:
		if $\configtwo$ is a final configuration reachable from $\config$ in 
		$n$ 
		transition steps
		then there exists a derivation $\run$ such that $\run:\transs\, \cont 
		\cods\config \tobdet^{\bigo{(n+1)\size\istr\log{\size\istr}}} \cont 
		\cods\configtwo$;
		\item \emph{Diverging computation}: if there is
		no final configuration reachable from $\config$ then $\transs\, \cont 
		\cods\config$ diverges.
	\end{enumerate}
\end{corollary}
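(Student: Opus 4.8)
The plan is to prove both items by chaining together the single-transition simulations of \reflemma{trans-sim}, which already performs all the delicate $\beta$-reduction bookkeeping; the corollary only has to iterate it and sum the resulting costs. A key observation used throughout is that the input string $\istr$ is immutable along any execution, so $\size\istr$ is a fixed constant and the per-transition bound $\bigo{\size\istr\log{\size\istr}}$ is uniform across the whole run.

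For the \emph{finite computation} case I would argue by induction on the number $n$ of transition steps separating $\config$ from the reachable final configuration $\configtwo$. In the base case $n=0$ we have $\config=\configtwo$ with $\config$ final, and the final-configuration clause of \reflemma{trans-sim} gives directly $\transs\,\cont\,\enc\config \tobdet^{\bigo{\size\istr\log{\size\istr}}} \cont\,\enc\config$, which is exactly the claimed $\bigo{(0+1)\size\istr\log{\size\istr}}$ bound. For the inductive step, let $\config \tomachtur \config'$ be the first transition, so that $\configtwo$ is reachable from $\config'$ in $n$ steps. The non-final clause of \reflemma{trans-sim} yields $\transs\,\cont\,\enc\config \tobdet^{\bigo{\size\istr\log{\size\istr}}} \transs\,\cont\,\enc{\config'}$, while the induction hypothesis gives $\transs\,\cont\,\enc{\config'} \tobdet^{\bigo{(n+1)\size\istr\log{\size\istr}}} \cont\,\enc\configtwo$. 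Concatenating the two derivations and adding the step counts, $\bigo{\size\istr\log{\size\istr}} + \bigo{(n+1)\size\istr\log{\size\istr}} = \bigo{(n+2)\size\istr\log{\size\istr}}$, which is the bound for $n+1$ transitions.

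For the \emph{diverging computation} case, the hypothesis that no final configuration is reachable from $\config$ means the computation of $\M$ proceeds forever, yielding an infinite transition sequence $\config = \config_0 \tomachtur \config_1 \tomachtur \config_2 \tomachtur \cdots$. Applying the non-final clause of \reflemma{trans-sim} at each index produces a derivation $\transs\,\cont\,\enc{\config_i} \tobdet^{k_i} \transs\,\cont\,\enc{\config_{i+1}}$ with $k_i \geq 1$ for every $i$. Splicing these non-empty derivations together yields an infinite $\tobdet$-reduction issuing from $\transs\,\cont\,\enc\config$; since $\detLam$ is deterministic by \reflemma{lamdet-is-det}, this is the unique maximal reduction out of the term, so it never reaches a normal form and hence diverges.

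The step I expect to be the main obstacle is the divergence argument, precisely the need to guarantee that each simulated transition \emph{genuinely advances} the reduction, contributing at least one $\beta$-step rather than looping in place. This is secured by the shape of \reflemma{trans-sim}: its non-final simulation opens with the unfolding of the fixpoint $\fix$, which already consumes two $\beta$-steps, so each $k_i$ is strictly positive and the spliced sequence is a genuine infinite reduction, not a stationary one.
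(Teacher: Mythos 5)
Your proposal is correct and follows exactly the route the paper intends: the paper's entire proof is the single sentence ``Straightforward inductions on the length of executions provide the following corollaries,'' and your argument is precisely that induction, chaining the two clauses of \reflemma{trans-sim}. Your two added observations---that the hidden constant in $\bigo{\size\istr\log{\size\istr}}$ is uniform because $\istr$ is immutable, and that each simulated transition contributes at least one $\beta$-step (the fixpoint unfolding), which together with determinism (\reflemma{lamdet-is-det}) makes the spliced infinite reduction a genuine divergence---are exactly the details the paper leaves implicit, and they are right.
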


\paragraph{The Simulation Theorem.}
We now have all the ingredients for the final theorem of this note.

\begin{theorem}[Simulation]
	Let $f:\Bool^*\rightarrow\Bool$ a function computed by a Turing machine
	$\M$ in time $T_{\M}$. Then there is an encoding $\cods{\cdot}$ into 
	$\detLam$ of $\Bool$, strings, and Turing machines over $\Bool$ such that 
	for every $\istr\in\Bool^+$, there exists $\run$ such that
	$\run:\cods\M \cods\istr \tobdet^n \cods{f(\istr)}$
	where $n=\Theta((T_{\M}(\size\istr)+1)\cdot \size\istr\cdot 
	\log{\size\istr})$.
\end{theorem}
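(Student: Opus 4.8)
The plan is to realise $\cods\M$ as the continuation-passing composition of the three building blocks already at our disposal: the initialiser $\inits$ of \reflemma{init-config}, the whole-run transition simulator packaged by Corollary~\ref{coro:exec}, and the extractor $\finals$ of \reflemma{final-config}. Concretely I would set
\[
\cods\M \defeq \la{\istr'}\,\inits\,(\la c.\transs\,(\la {c'}.\finals\,(\la\var\var)\,c')\,c)\,\istr',
\]
with the outermost continuation being the identity $\la\var\var$ and with $\cods\istr$ understood as the encoding of the delimited machine input $\stsym\cons\istr\cons\ensym$ expected by $\inits$ (equivalently, $\cods\M$ prepends the two delimiters in a $\bigo{\size\istr}$ preamble, which the final bound absorbs). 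The only real subtlety in the definition is that $\detLam$ forces every application argument to be a value: the intermediate continuations cannot be left as the bare applications $\transs(\dots)$ and $\finals(\la\var\var)$, so I $\eta$-package them as the abstractions $\la c.\transs(\dots)c$ and $W\defeq\la{c'}.\finals(\la\var\var)c'$, paying only a constant number of extra $\beta$-steps. The boolean output is encoded so that $\statefint$ and $\statefinf$ correspond to the two projections $\la\var\la\vartwo\var$ and $\la\var\la\vartwo\vartwo$ produced by \reflemma{final-config}.

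With this definition, evaluation of $\cods\M\,\cods\istr$ factors into four consecutive, non-overlapping phases, and determinism (\reflemma{lamdet-is-det}) guarantees that the derivation $\run$ is unique, so that its length $n$ is well defined. First, one top-level $\beta$-step exposes $\inits$ applied to its continuation and to $\cods\istr$, and \reflemma{init-config} rewrites this in $\Theta(1)$ steps to that continuation applied to $\enc\initconfigs$; one administrative step then yields $\transs\,W\,\enc\initconfigs$. Second, since $\M$ computes $f$ in time $T_{\M}$, the initial configuration reaches a final configuration $\configtwo$ of state $\state_{f(\istr)}$ in exactly $T_{\M}(\size\istr)$ transition steps, so the finite-computation case of Corollary~\ref{coro:exec} rewrites $\transs\,W\,\enc\initconfigs$ to $W\,\enc\configtwo$ in $\bigo{(T_{\M}(\size\istr)+1)\cdot\size\istr\cdot\log\size\istr}$ steps. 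Third, one administrative step gives $\finals\,(\la\var\var)\,\enc\configtwo$, which by \reflemma{final-config} reduces in $\Theta(\size\States)=\Theta(1)$ steps (the state set being a fixed parameter) to $(\la\var\var)\,\cods{f(\istr)}$. Fourth, a single $\beta$-step yields $\cods{f(\istr)}$. Summing the four contributions gives $n=\Theta(1)+\bigo{(T_{\M}(\size\istr)+1)\cdot\size\istr\cdot\log\size\istr}+\Theta(1)$, which already establishes the upper-bound half of the claimed $\Theta$.

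The main obstacle is the matching lower bound needed to upgrade $\bigo{}$ to $\Theta$. The cost of a single simulated transition is dominated by the look-up of \reflemma{look}, whose cost at input-head position $\counter_t$ is of order $\counter_t\log\counter_t$; hence the exact length of the run is $\sum_t\Theta(\counter_t\log\counter_t)$, which is only bounded above by $(T_{\M}(\size\istr)+1)\cdot\size\istr\cdot\log\size\istr$ and can be strictly smaller when the head lingers near the left end of the input. Consequently the asserted $\Theta$ has to be read as a worst-case statement over machines and inputs, or under the standing assumption that the head genuinely traverses the input. I would therefore (i) first sharpen \reflemma{look}, \reflemma{trans-sim}, and Corollary~\ref{coro:exec} from $\bigo{}$ to $\Theta$, so that each look-up contributes its exact $\Theta(\counter_t\log\counter_t)$, and (ii) make the intended worst-case reading precise, exhibiting computations that spend $\Omega(T_{\M}(\size\istr))$ transitions with the input head in the right half of the tape, which forces a per-step $\Omega(\size\istr\log\size\istr)$ contribution. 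Making this tightness rigorous, rather than the otherwise mechanical chaining of the three lemmas, is where the real difficulty lies.
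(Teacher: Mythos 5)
Your construction and phase-by-phase chaining of \reflemma{init-config}, Corollary~\ref{coro:exec}, and \reflemma{final-config}, including the $\eta$-expansion of the intermediate continuations needed to stay inside $\detLam$, is exactly the paper's proof. Your concern about the lower bound is fair but goes beyond what the paper itself does: the paper's proof silently upgrades the $\bigo{}$ of Corollary~\ref{coro:exec} to the claimed $\Theta$ without the sharpening or worst-case reading you propose, so this is a (minor) looseness in the paper rather than a missing step in your argument.
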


\begin{proof}
	Intuitively, the term is simply
	$$
	\enc\M \defeq \inits (\transs(\finals\Id)
	$$
	where the identity $\Id$ plays the role of the initial continuation.
	
	Such a term however does not belong to the deterministic $\l$-calculus, 
	because the right subterms of applications are not always values. The 
	solution is simple, it is enough to $\eta$-expand the arguments. Thus, 
	define
	
	$$
	\enc\M \defeq \inits (\la\vartwo\transs(\la \var\finals\Id\var)\vartwo)
	$$

	Then
	$$\begin{array}{lll}
		\enc\M \cods{\istr} 
		& = & \\
		\inits (\la\vartwo\transs (\la \var\finals \Id\var)\vartwo) 
		\cods{\istr} 
		& \tobdet^{\Theta(1)} & (\mbox{by }\reflemmaeq{init-config})\\    
		(\la\vartwo\transs (\la \var\finals \Id\var)\vartwo) 
		\cods{\initconfig}
		& \tobdet\\
		\transs (\la \var\finals \Id\var) \cods{\initconfig}
		& \tobdet^{\Theta((T_{\M}(\size\istr)+1)\cdot \size\istr \cdot 
		\log{\size\istr})} & (\mbox{by }\refcoroeq{exec})\\
		(\la \var\finals \Id\var) \cods{\config_{\tt fin}(f(\istr))}
		& \tobdet\\
		\finals \Id \cods{\config_{\tt fin}(f(\istr))}
		& \tobdet^{\Theta(\size\States)} & (\mbox{by 
		}\reflemmaeq{final-config})\\    
		\Id \cods{f(\istr)}
		& \tobdet &\\
		\cods{f(\istr)}
	\end{array}$$
\end{proof}

\bibliographystyle{alpha}
\bibliography{main.bbl}

\begin{thebibliography}{DLA17}

\bibitem[DLA17]{DBLP:journals/corr/abs-1711-10078}
Ugo Dal~Lago and Beniamino Accattoli.
\newblock Encoding turing machines into the deterministic lambda-calculus.
\newblock {\em CoRR}, abs/1711.10078, 2017.

\bibitem[Gol00]{DBLP:journals/jfp/Goldberg00}
Mayer Goldberg.
\newblock Theoretical pearl: An adequate and efficient left-associated binary
  numeral system in the lambda-calculus.
\newblock {\em J. Funct. Program.}, 10(6):607--623, 2000.

\bibitem[Mog01]{DBLP:conf/ershov/Mogensen01}
Torben~{\AE}. Mogensen.
\newblock An investigation of compact and efficient number representations in
  the pure lambda calculus.
\newblock In Dines Bj{\o}rner, Manfred Broy, and Alexandre~V. Zamulin, editors,
  {\em Perspectives of System Informatics, 4th International Andrei Ershov
  Memorial Conference, {PSI} 2001, Akademgorodok, Novosibirsk, Russia, July
  2-6, 2001, Revised Papers}, volume 2244 of {\em Lecture Notes in Computer
  Science}, pages 205--213. Springer, 2001.

\bibitem[vEB12]{DBLP:conf/sofsem/Boas12}
Peter van Emde~Boas.
\newblock Turing machines for dummies - why representations do matter.
\newblock In M{\'{a}}ria Bielikov{\'{a}}, Gerhard Friedrich, Georg Gottlob,
  Stefan Katzenbeisser, and Gy{\"{o}}rgy Tur{\'{a}}n, editors, {\em {SOFSEM}
  2012: Theory and Practice of Computer Science - 38th Conference on Current
  Trends in Theory and Practice of Computer Science, {\v{S}}pindler{\r{u}}v
  Ml{\'{y}}n, Czech Republic, January 21-27, 2012. Proceedings}, volume 7147 of
  {\em Lecture Notes in Computer Science}, pages 14--30. Springer, 2012.

\end{thebibliography}

\end{document}